\providecommand{\U}[1]{\protect\rule{.1in}{.1in}}
\newtheorem{theorem}{Theorem}
\newtheorem{conjecture}[theorem]{Conjecture}
\newtheorem{corollary}[theorem]{Corollary}
\newtheorem{example}[theorem]{Example}
\newtheorem{lemma}[theorem]{Lemma}
\newenvironment{proof}[1][Proof]{\noindent\textbf{#1.} }{\ \rule{0.5em}{0.5em}}
\begin{document}

\author{Vadim E. Levit\\Department of Computer Science\\Ariel University, ISRAEL\\levitv@ariel.ac.il
\and David Tankus\\Department of Software Engineering\\Sami Shamoon College of Engineering, ISRAEL\\davidt@sce.ac.il}
\title{Recognizing generating subgraphs revisited}
\date{}
\maketitle

\begin{abstract}
A graph $G$ is \textit{well-covered} if all its maximal independent sets are
of the same cardinality. Assume that a weight function $w$ is defined on its
vertices. Then $G$ is $w$\textit{-well-covered} if all maximal independent
sets are of the same weight. For every graph $G$, the set of weight functions
$w$ such that $G$ is $w$-well-covered is a \textit{vector space}, denoted as
$WCW(G)$. Deciding whether an input graph $G$ is well-covered is
\textbf{co-NP}-complete. Therefore, finding $WCW(G)$ is \textbf{co-NP}-hard.

A \textit{generating subgraph} of a graph $G$ is an induced complete bipartite
subgraph $B$ of $G$ on vertex sets of bipartition $B_{X}$ and $B_{Y}$, such
that each of $S \cup B_{X}$ and $S \cup B_{Y}$ is a maximal independent set of
$G$, for some independent set $S$. If $B$ is generating, then $w(B_{X}%
)=w(B_{Y})$ for every weight function $w \in WCW(G)$. Therefore, generating
subgraphs play an important role in finding $WCW(G)$.

The decision problem whether a subgraph of an input graph is generating is
known to be \textbf{NP-}complete. In this article we prove \textbf{NP-}%
completeness of the problem for graphs without cycles of length 3 and 5, and
for bipartite graphs with girth at least 6. On the other and, we supply
polynomial algorithms for recognizing generating subgraphs and finding
$WCW(G)$, when the input graph is bipartite without cycles of length 6. We
also present a polynomial algorithm which finds $WCW(G)$ when $G$ does not
contain cycles of lengths 3, 4, 5, and 7.

\end{abstract}

\section{Introduction}

\subsection{Basic Definitions and Notation}

Throughout this paper $G$ is a simple (i.e., a finite, undirected, loopless
and without multiple edges) graph with vertex set $V(G)$ and edge set $E(G)$.

Cycles of $k$ vertices are denoted by $C_{k}$. When we say that $G$ does not
contain $C_{k}$ for some $k \geq3$, we mean that $G$ does not admit subgraphs
isomorphic to $C_{k}$. It is important to mention that these subgraphs are not
necessarily induced.

Let $u$ and $v$ be two vertices in $G$. The \textit{distance} between $u$ and
$v$, denoted as $d(u,v)$, is the length of a shortest path between $u$ and
$v$, where the length of a path is the number of its edges. If $S$ is a
non-empty set of vertices, then the \textit{distance} between $u$ and $S$,
denoted as $d(u,S)$, is defined by
\[
d(u,S)=\min\{d(u,s):s\in S\}.
\]
For every positive integer $i$, denote
\[
N_{i}(S)=\{x\in V\left(  G\right)  :d(x,S)=i\},
\]
and
\[
N_{i}\left[  S\right]  =\{x\in V\left(  G\right)  :d(x,S)\leq i\}.
\]

We abbreviate $N_{1}(S)$ and $N_{1}\left[  S\right]  $ to be $N(S)$ and
$N\left[  S\right]  $, respectively. If $S$ contains a single vertex, $v$,
then we abbreviate $N_{i}(\{v\})$, $N_{i}\left[  \{v\}\right]  $, $N(\{v\})$,
and $N\left[  \{v\}\right]  $ to be $N_{i}(v)$, $N_{i}\left[  v\right]  $,
$N(v)$, and $N\left[  v\right]  $, respectively. We denote by $G[S]$ the
subgraph of $G$ induced by $S$. For every two sets, $S$ and $T$, of vertices
of $G$, we say that $S$ \textit{dominates} $T$ if $T\subseteq N\left[
S\right]  $.

\subsection{Well-Covered Graphs}

Let $G$ be a graph. A set of vertices $S$ is \textit{independent} if its
elements are pairwise nonadjacent. An independent set of vertices is
\textit{maximal} if it is not a subset of another independent set. An
independent set of vertices is \textit{maximum} if the graph does not contain
an independent set of a higher cardinality.

The graph $G$ is \textit{well-covered} if every maximal independent set is
maximum \cite{plummer:definition}. Assume that a weight function $w:V\left(
G\right)  \longrightarrow\mathbb{R}$ is defined on the vertices of $G$. For
every set $S\subseteq V\left(  G\right)  $, define $w(S)=%
{\displaystyle\sum\limits_{s\in S}}
w(s)$. Then $G$ is $w$\textit{-well-covered} if all maximal independent sets
of $G$ are of the same weight.

The problem of finding a maximum independent set in an input graph is
\textbf{NP-}hard. However, if the input is restricted to well-covered graphs,
then a maximum independent set can be found polynomially using the
\textit{greedy algorithm}. Similarly, if a weight function $w:V\left(
G\right)  \longrightarrow\mathbb{R}$ is defined on the vertices of $G$, and
$G$ is $w$-well-covered, then finding a maximum weight independent set is a
polynomial problem.

The recognition of well-covered graphs is known to be \textbf{co-NP}-complete.
This is proved independently in \cite{cs:note} and \cite{sknryn:compwc}. In
\cite{cst:structures} it is proven that the problem remains \textbf{co-NP}%
-complete even when the input is restricted to $K_{1,4}$-free graphs. However,
the problem is polynomially solvable for $K_{1,3}$-free graphs
\cite{tata:wck13f,tata:wck13fn}, for bipartite graphs
\cite{ravindra:well-covered}, for graphs with girth $5$ at least
\cite{fhn:wcg5}, for graphs with a bounded maximal degree \cite{cer:degree},
for chordal graphs \cite{ptv:chordal}, and for graphs without cycles of
lengths $4$ and $5$ \cite{fhn:wc45}. It should be emphasized that the
forbidden cycles are not necessarily induced.

For every graph $G$, the set of weight functions $w$ for which $G$ is
$w$-well-covered is a vector space \cite{cer:degree}. That vector space is
denoted $WCW(G)$ \cite{bnz:wcc4}. Since recognizing well-covered graphs is
\textbf{co-NP}-complete, finding the vector space $WCW(G)$ of an input graph
$G$ is \textbf{co-NP}-hard. However, finding $WCW(G)$ can be done in
polynomial time when the input is restricted to $K_{1,3}$-free graphs
\cite{lt:wwcclaw}, to graphs with a bounded maximal degree \cite{cer:degree},
to graphs without cycles of lengths $4$, $5$ and $6$ \cite{lt:wwc456}, and to
chordal graphs \cite{bn:wcchordal}.

\subsection{Generating Subgraphs and Relating Edges}

Further we make use of the following notions, which have been introduced in
\cite{lt:wc4567}. Let $B$ be an induced complete bipartite subgraph of $G$ on
vertex sets of bipartition $B_{X}$ and $B_{Y}$. Assume that there exists an
independent set $S$ such that each of $S\cup B_{X}$ and $S\cup B_{Y}$ is a
maximal independent set of $G$. Then $B$ is a \textit{generating} subgraph of
$G$, and the set $S$ is a \textit{witness} that $B$ is generating. We observe
that every weight function $w$ such that $G$ is $w$-well-covered must satisfy
the restriction $w(B_{X})=w(B_{Y})$.

In the restricted case that the generating subgraph $B$ is isomorphic to
$K_{1,1}$, call its vertices $x$ and $y$. In that case $xy$ is a
\textit{relating} edge, and $w(x)=w(y)$ for every weight function $w$ such
that $G$ is $w$-well-covered.

Recognizing relating edges is known to be \textbf{NP-}complete \cite{bnz:wcc4}%
, and it remains \textbf{NP-}complete even when the input is restricted to
graphs without cycles of lengths $4$ and $5$ \cite{lt:relatedc4}, and to
bipartite graphs \cite{lt:complexity}. Therefore, recognizing generating
subgraphs is also \textbf{NP-}complete for these restricted cases. However,
recognizing relating edges can be done in polynomial time if the input is
restricted to graphs without cycles of lengths $4$ and $6$ \cite{lt:relatedc4}%
, to graphs without cycles of lengths $5$ and $6$ \cite{lt:wwc456}, and to
graphs with a bounded maximal degree \cite{lt:complexity}.

It is also known that recognizing generating subgraphs is \textbf{NP-}complete
for graphs with girth at least $6$ \cite{lt:complexity}, and for $K_{1,4}%
$-free graphs \cite{lt:complexity}. However, the problem is a polynomial
solvable when the input is restricted to graphs without cycles of lengths $4$,
$6$ and $7$ \cite{lt:wc4567}, to graphs without cycles of lengths $4$, $5$ and
$6$ \cite{lt:wwc456}, to graphs without cycles of lengths $5$, $6$ and $7$
\cite{lt:wwc456}, to claw-free graphs \cite{tata:wck13fn}, and to graphs with
a bounded maximal degree \cite{lt:complexity}.

\subsection{Main Results}

This paper is a continuation of the research performed in \cite{lt:complexity}.

Two restricted cases of the well-known \textbf{SAT} problem are presented in
\cite{lt:complexity}, and proved to be \textbf{NP-}complete. In Section 2 we
use these results to prove that another restricted case of the \textbf{SAT}
problem, called \textbf{DMSAT}, is \textbf{NP-}complete as well.

In Section 3 we prove \textbf{NP-}completeness of two restricted cases of the
recognizing generating subgraphs problem. In Subsection 3.1 we consider the
case in which the input graph $G$ does not contain cycles of lengths $3$ and
$5$, and $B$ is $K_{1,2}$. We use the main result of Section 2 for this proof.
In Subsection 3.2 we deal with bipartite graphs with girth at least $6$.

In Section 4 we present a polynomial algorithm for recognizing generating
subgraphs of bipartite graphs without cycles of length 6. For this family of
graphs we also supply a polynomial algorithm which finds $WCW(G)$.

Section 5 contains a polynomial algorithm which finds $WCW(G)$ for graphs
without cycles of lengths 3, 4, 5, and 7. Especially, the algorithm works for
bipartite graphs with girth at least 6.

The following open question is presented in \cite{lt:complexity}. Does there
exist a family of graphs for which recognizing generating subgraphs is a
polynomial task, but finding $WCW(G)$ is \textbf{co}-\textbf{NP-}hard.
Although we still do not know the answer for this question, Subsection 3.2 and
Section 5 together give a first known example for the opposite case: a family
of graphs for which recognizing generating subgraphs is an \textbf{NP-}%
complete problem, but finding the vector space $WCW(G)$ is a polynomial task.

\section{Binary Variables}

A \textit{binary variable} is a variable whose value is either $0$ or $1$. If
$x$ is a binary variable, than its \textit{negation} is denoted by
$\overline{x}$. Each of $x$ and $\overline{x}$ is called a \textit{literal}.
Let $X=\{x_{1},...,x_{n}\}$ be a set of binary variables. A \textit{clause}
$c$ over $X$ is a set of literals belonging to $\{x_{1},\overline{x_{1}%
},...,x_{n},\overline{x_{n}}\}$ such that $c$ does not contain both a variable
and its negation. A \textit{truth assignment} is a function
\[
\Phi:\{x_{1},\overline{x_{1}},...,x_{n},\overline{x_{n}}\}\longrightarrow
\{0,1\}
\]
such that
\[
\Phi(\overline{x_{i}})=1-\Phi(x_{i})\text{ for each }1\leq i\leq n.
\]
A truth assignment $\Phi$ \textit{satisfies} a clause $c$ if $c$ contains at
least one literal $l$ such that $\Phi(l)=1$.

In \cite{lt:complexity} the following problems about binary variables are
presented.\newline\textbf{MONOTONE SAT }problem \cite{Gold1978},
\cite{lt:complexity}:\newline\textit{Input}: A set $X$ of binary variables and
two sets, $C_{1}$ and $C_{2}$, of clauses over $X$, such that all literals of
the clauses belonging to $C_{1}$ are variables, and all literals of clauses
belonging to $C_{2}$ are negations of variables.\newline\textit{Question}: Is
there a truth assignment for $X$, which satisfies all clauses of $C=C_{1}\cup
C_{2}$?

\begin{theorem}
\cite{Gold1978}\cite{lt:complexity}\label{monotonesat} The \textbf{MONOTONE
SAT} problem is \textbf{NP-}complete.
\end{theorem}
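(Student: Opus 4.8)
The plan is to prove both membership in \textbf{NP} and \textbf{NP-}hardness. Membership is immediate: a truth assignment $\Phi$ is a polynomial-size certificate, and checking that $\Phi$ satisfies every clause of $C=C_{1}\cup C_{2}$ takes time linear in the size of the instance. For hardness I would reduce from the ordinary \textbf{CNF-SAT} problem, which is known to be \textbf{NP-}complete. The only genuine difficulty is that a general clause may mix positive and negative literals, whereas in a \textbf{MONOTONE SAT} instance every clause must be either all-positive or all-negative. The key idea is to introduce, for each variable $x_{i}$, a fresh companion variable $x_{i}'$ whose intended meaning is $\overline{x_{i}}$, and then to rewrite every negative literal as a positive one.

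For the construction, given a \textbf{CNF} formula $\varphi$ over variables $x_{1},\ldots,x_{n}$, I would build an instance over $X=\{x_{1},\ldots,x_{n},x_{1}',\ldots,x_{n}'\}$. In each clause of $\varphi$, replace every negative literal $\overline{x_{i}}$ by the positive literal $x_{i}'$; the resulting clause contains only positive literals, so it is placed in $C_{1}$. To force $x_{i}'$ to behave as $\overline{x_{i}}$, add for each $i$ the clause $\{x_{i},x_{i}'\}$ to $C_{1}$, asserting that at least one of the two is true, and the clause $\{\overline{x_{i}},\overline{x_{i}'}\}$ to $C_{2}$, asserting that at least one is false. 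Jointly these two clauses force exactly one of $x_{i},x_{i}'$ to be true. The reduction is plainly polynomial: it roughly doubles the number of variables and adds $2n$ clauses.

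For correctness I would argue both directions. In the forward direction, if $\phi$ satisfies $\varphi$, extend it by $x_{i}'=1-\phi(x_{i})$; the complementarity clauses then hold by construction, and each rewritten clause is satisfied because a true literal $x_{i}$ stays true while a true literal $\overline{x_{i}}$ becomes the true literal $x_{i}'$. In the backward direction, any assignment $\Psi$ satisfying the \textbf{MONOTONE SAT} instance must satisfy both $\{x_{i},x_{i}'\}$ and $\{\overline{x_{i}},\overline{x_{i}'}\}$, which together yield $\Psi(x_{i}')=1-\Psi(x_{i})$; restricting $\Psi$ to $x_{1},\ldots,x_{n}$ then satisfies $\varphi$, since each rewritten clause holds and the identity $x_{i}'=\overline{x_{i}}$ translates it back to the original clause of $\varphi$.

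I expect the single subtle point to be verifying that the two-clause complementarity gadget genuinely encodes negation, i.e.\ that placing $\{x_{i},x_{i}'\}$ in $C_{1}$ and $\{\overline{x_{i}},\overline{x_{i}'}\}$ in $C_{2}$ is exactly what is needed to respect the monotonicity constraint while still pinning $x_{i}'$ to $\overline{x_{i}}$. Once this is confirmed, the equivalence of satisfiability in both directions is routine, and the remaining ingredients — membership in \textbf{NP} and polynomiality of the reduction — are straightforward.
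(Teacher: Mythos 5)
The paper does not prove Theorem \ref{monotonesat} at all: it is imported verbatim from \cite{Gold1978} and \cite{lt:complexity}, so there is no in-paper proof to compare against. Your argument is nevertheless correct and is the standard one. Membership in \textbf{NP} is immediate, the reduction from \textbf{CNF-SAT} is polynomial, and the complementarity gadget works: $\{x_{i},x_{i}'\}\in C_{1}$ forces at least one of the pair to be true, $\{\overline{x_{i}},\overline{x_{i}'}\}\in C_{2}$ forces at least one to be false, and together they pin $\Psi(x_{i}')=1-\Psi(x_{i})$, which makes the translation of each rewritten clause back to the original clause of $\varphi$ exact in both directions. It is worth noting that this is precisely the same device the paper itself deploys in its proof of Theorem \ref{dusat}, where the fresh variables $z_{i}$ play the role of your $x_{i}'$ and the clauses $d_{i}=(x_{i},z_{i})$, $e_{i}=(\overline{x_{i}},\overline{z_{i}})$ are your two complementarity clauses; the only difference is that there the source problem is \textbf{DSAT} rather than general \textbf{CNF-SAT}, because additional intersection constraints on the clauses must be preserved. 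For the unconstrained \textbf{MONOTONE SAT} statement you were asked to prove, your reduction from \textbf{CNF-SAT} suffices as written.
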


\textbf{DSAT }problem \cite{lt:complexity}:\newline\textit{Input}: A set $X$
of binary variables and a set $C$ of clauses over $X$ such that the following holds:

\begin{itemize}
\item Every clause contains $2$ or $3$ literals.

\item Every two clauses have at most one literal in common.

\item If two clauses, $c_{1}$ and $c_{2}$, have a common literal $l_{1}$, then
there does not exist a literal $l_{2}$ such that $c_{1}$ contains $l_{2}$ and
$c_{2}$ contains $\overline{l_{2}}$.
\end{itemize}

\textit{Question}: Is there a truth assignment for $X$ which satisfies all
clauses of $C$?

\begin{theorem}
\cite{lt:complexity}\label{dsatnpc} The \textbf{DSAT} problem is \textbf{NP-}complete.
\end{theorem}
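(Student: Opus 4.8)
The plan is to prove that \textbf{DSAT} is \textbf{NP-}complete by a polynomial reduction from \textbf{MONOTONE SAT}, whose hardness is guaranteed by Theorem \ref{monotonesat}. Membership in \textbf{NP} is immediate, since a truth assignment is a certificate that can be verified in polynomial time by checking each clause; so the entire effort concentrates on the reduction. The obstacle is that the three structural restrictions defining \textbf{DSAT} --- clauses of size $2$ or $3$, pairwise at most one shared literal, and the ``no conflicting second literal'' condition --- are not met by a generic \textbf{MONOTONE SAT} instance. Thus the reduction must rewrite an arbitrary monotone instance into an equivalent one obeying all three constraints.

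First I would take a \textbf{MONOTONE SAT} instance $(X, C_1, C_2)$, where clauses in $C_1$ are purely positive and clauses in $C_2$ purely negative. The first reduction step is to chop long clauses down to size $2$ or $3$: a clause $\{l_1,\dots,l_k\}$ with $k\geq 4$ is replaced by a chain of size-$3$ clauses linked by fresh auxiliary variables in the standard manner used to reduce \textbf{SAT} to \textbf{3-SAT}, namely $\{l_1,l_2,a_1\}, \{\overline{a_1},l_3,a_2\},\dots$, so that the new instance is satisfiable exactly when the original is. Since the clauses of $C_1$ are monotone positive and those of $C_2$ monotone negative, the auxiliary literals can be inserted so that each clause remains ``almost monotone,'' which will help later in satisfying the third condition. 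Singleton clauses, if any, would be padded by duplicating the literal so that the size-$2$ lower bound holds.

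The second and more delicate step is to enforce the overlap conditions. To guarantee that any two clauses share at most one literal, I would introduce, for each variable $x$, a collection of fresh ``copies'' $x^{(1)},x^{(2)},\dots$ and distribute the occurrences of $x$ among distinct copies, forcing all copies to be logically equivalent by adding equivalence-enforcing clauses (each a pair $\{x^{(i)},\overline{x^{(j)}}\}$ style constraint realized within the size-$2$ allowance). Assigning a private copy to each clause occurrence ensures no literal is shared by two original clauses, so two clauses can overlap only through the linkage machinery, which I would design to share no more than a single auxiliary literal. The third condition --- that if two clauses share a literal $l_1$ then they do not contain a complementary pair $l_2,\overline{l_2}$ --- is handled by the same copy mechanism: because the original clauses are monotone (all-positive or all-negative) and each variable occurrence gets a distinct copy, the only complementary literals arise between the equivalence clauses, which I would arrange to be pairwise disjoint in their shared variables.

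The hard part will be verifying simultaneously that all three \textbf{DSAT} conditions hold after the transformations and that satisfiability is preserved in both directions; these goals pull against each other, since the gadgets added to enforce one condition introduce new clause overlaps that must be checked against the others. I would finish by arguing equivalence: a satisfying assignment of the original monotone instance extends to the copies and auxiliary variables along the chains, and conversely any satisfying assignment of the constructed \textbf{DSAT} instance restricts to a satisfying assignment of $(X,C_1,C_2)$. Since every step adds only a polynomial number of variables and clauses, the reduction is polynomial, completing the proof.
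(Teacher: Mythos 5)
First, note that the paper does not prove Theorem \ref{dsatnpc} at all: it is imported from \cite{lt:complexity}, so there is no in-paper argument to compare yours against. Judged on its own, your strategy (reduce from \textbf{MONOTONE SAT}, split long clauses as in the standard \textbf{3-SAT} reduction, and give each occurrence of each variable a private copy, tying the copies together by implication clauses) is viable and can be pushed through. But as written the attempt has genuine gaps. The decisive step --- checking that the splitting chains, the copy-equivalence clauses, and the copied original clauses jointly satisfy the ``at most one common literal'' condition and the ``no complementary pair alongside a common literal'' condition --- is exactly what you label ``the hard part'' and leave undone. Since the entire content of the theorem is that these constraints can be met while preserving satisfiability, the proof must actually carry out this case analysis (it does work: after private copying, two distinct original clauses share no literal at all, so the overlap conditions only need to be checked between original clauses and equivalence clauses, and among the equivalence and chain clauses themselves).

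Second, one concrete step fails as stated: padding a singleton clause ``by duplicating the literal'' achieves nothing, because clauses are defined as sets of literals, so $\{l,l\}=\{l\}$ still has one literal. The obvious repair, replacing $\{l\}$ by $\{l,u\}$ and $\{l,\overline{u}\}$ for a fresh variable $u$, is itself forbidden: these two clauses share the literal $l$ while containing the complementary pair $u$, $\overline{u}$, which violates the third \textbf{DSAT} condition. You would need, for instance, to eliminate unit clauses by propagation before the construction. Finally, a single clause of the form $\{x^{(i)},\overline{x^{(j)}}\}$ enforces only one implication, not equivalence; you need a cycle (or star) of such implications, and the assertion that the resulting complementary occurrences cause no violation requires the explicit observation that two such implication clauses never share a literal, so the third condition is vacuous for them. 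None of these defects is fatal to the approach, but each must be repaired and verified before this counts as a proof.
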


In this paper, we define the following problem.\newline\textbf{DMSAT
}problem:\newline\textit{Input}: A set $X$ of binary variables and two sets,
$C_{1}$ and $C_{2}$, of clauses over $X$, such that the following holds:

\begin{itemize}
\item All literals of the clauses belonging to $C_{1}$ are variables.

\item All literals of the clauses belonging to $C_{2}$ are negations of variables.

\item Every clause of $C_{1}$ contains $2$ or $3$ literals.

\item Every clause of $C_{2}$ contains $2$ literals.

\item Every two clauses of $C_{1}$ have at most one literal in common.

\item Every two clauses of $C_{2}$ are disjoint.
\end{itemize}

\textit{Question}: Is there a truth assignment for $X$ which satisfies all
clauses of $C = C_{1} \cup C_{2}$?

\begin{theorem}
\label{dusat} The \textbf{DMSAT} problem is \textbf{NP-}complete.
\end{theorem}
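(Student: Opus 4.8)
The plan is to reduce from the **DSAT** problem (Theorem~\ref{dsatnpc}), since **DMSAT** is structurally close to **DSAT** but with two key differences: **DMSAT** is \emph{monotone} (clauses separated into all-positive $C_1$ and all-negative $C_2$), and it imposes a stricter constraint on $C_2$ (clauses of size exactly $2$, pairwise disjoint). The standard technique for converting a general **SAT**-type instance into a monotone one is the same device used to prove Theorem~\ref{monotonesat}: for each original variable $x_i$, introduce a fresh variable $x_i'$ intended to represent $\overline{x_i}$, and enforce $x_i' = \overline{x_i}$ by adding the two clauses $\{x_i, x_i'\}$ and $\{\overline{x_i}, \overline{x_i'}\}$. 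Then in every original clause, replace each negative literal $\overline{x_j}$ by the positive literal $x_j'$, so that all literals in the (rewritten) original clauses become positive. This is the heart of the construction, and I would build the reduction around it.

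Concretely, given a **DSAT** instance with variable set $X$ and clause set $C$, I would set the **DMSAT** variable set to $X' = X \cup \{x_i' : x_i \in X\}$. I would form $C_1$ from the original clauses with every negative literal $\overline{x_j}$ rewritten as $x_j'$, together with all the positive coupling clauses $\{x_i, x_i'\}$; these are exactly the clauses all of whose literals are variables. I would form $C_2$ from the negative coupling clauses $\{\overline{x_i}, \overline{x_i'}\}$, which are all of size $2$ and pairwise disjoint (each involves a distinct variable pair $x_i, x_i'$), so the $C_2$ constraints are met automatically. A truth assignment satisfying all of $C$ in **DSAT** extends to one satisfying $C_1 \cup C_2$ by setting $\Phi(x_i') = \Phi(\overline{x_i})$; conversely, any assignment satisfying the coupling clauses forces $\Phi(x_i') = \overline{\Phi(x_i)}$, so it projects back to a satisfying assignment of the original **DSAT** instance. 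The reduction is clearly polynomial.

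\textbf{The main obstacle} will be verifying the two structural conditions on $C_1$, namely that every clause of $C_1$ has $2$ or $3$ literals and that every two clauses of $C_1$ share at most one literal. The size bound is delicate: the rewriting preserves the size of each original clause (which is $2$ or $3$ by the **DSAT** hypothesis), and the positive coupling clauses have size $2$, so this holds, but I must confirm that the rewriting $\overline{x_j} \mapsto x_j'$ does not collapse two distinct literals of a clause into one—this is guaranteed because the third **DSAT** condition forbids a clause from containing both $x_j$ and $\overline{x_j}$ patterns across the relevant structure, and within a single clause we never have both $x_j$ and $\overline{x_j}$ by definition. The ``at most one common literal'' condition requires more care: I must check it separately for (i) two rewritten original clauses, (ii) a rewritten original clause and a coupling clause, and (iii) two distinct coupling clauses. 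Case (i) should follow from the corresponding **DSAT** property combined with the injectivity of the literal rewriting, though I would need to rule out the possibility that two clauses sharing $\overline{x_j}$ and some other literal in **DSAT** now share $x_j'$ and its image—here is exactly where the third **DSAT** condition (the compatibility of shared literals) does the work, and I would lean on it to show no new coincidences arise.

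\textbf{I expect} the write-up to consist of stating the construction, proving the two directions of the equivalence (both short), and then the more technical lemma-style verification of the six **DMSAT** structural conditions, with cases (i)–(iii) above being the portion demanding the most attention.
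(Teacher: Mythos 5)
Your construction is exactly the paper's: introduce a twin variable for each $x_i$, rewrite $\overline{x_i}$ as the twin in every original clause, and couple the pair with one all-positive and one all-negative two-literal clause, putting the rewritten clauses and positive couplers in $C_1$ and the negative couplers in $C_2$. One small correction to your verification plan: case (i) needs only the injectivity of the literal renaming (so the at-most-one-common-literal property transfers directly), and the rewritten-clause-versus-positive-coupler case is handled by the fact that no clause contains both a variable and its negation, not by the third \textbf{DSAT} condition; otherwise the argument is the same as the paper's.
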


\begin{proof}
Obviously, the \textbf{DMSAT} problem is \textbf{NP}. We prove its
\textbf{NP-}completeness by showing a reduction from the \textbf{DSAT}
problem. Let
\[
I_{1}=(X=\{x_{1},...,x_{n}\},C=\{c_{1},...,c_{m}\})
\]
be an instance of the \textbf{DSAT} problem. Define $Z=\{x_{1},...,x_{n}%
,z_{1},...,z_{n}\}$, where $z_{1},...,z_{n}$ are new variables. For every
$1\leq j\leq m$, let $c_{j}^{\prime}$ be the clause obtained from $c_{j}$ by
replacing $\overline{x_{i}}$ with $z_{i}$ for each $1\leq i\leq n$. Let
$C^{\prime}=\{c_{1}^{\prime},...,c_{m}^{\prime}\}$. For each $1\leq i\leq n$
define two new clauses, $d_{i}=(x_{i},z_{i})$ and $e_{i}=(\overline{x_{i}%
},\overline{z_{i}})$. Let $D=\{d_{1},...,d_{n}\}$ and $E=\{e_{1},...,e_{n}\}$.
Define $I_{2}=(Z,C^{\prime}\cup D\cup E)$.

We show that $I_{2}$ is an instance of the \textbf{DMSAT} problem. Obviously,
all literals of $C^{\prime}\cup D$ are variables, and all literals of $E$ are
negations of variables. Moreover, every clause of $C^{\prime}$ contains $2$ or
$3$ literals, and every clause of $D\cup E$ contains $2$ literals. The fact
that there are no $2$ clauses of $C$ with $2$ common literals implies that
there are no 2 clauses of $C^{\prime}$ with $2$ common literals. A clause in
$C^{\prime}$ and a clause in $D$ can not have $2$ common literals, since a
clause in $C$ can not contain both a variable and its negation. A clause in
$C^{\prime}\cup D$ and a clause in $E$ do not have common literals, since all
literals of $C^{\prime}\cup D$ are variables and all literals of $E$ are
negations of variables. By definition of $E$, if $i\neq j$ then $e_{i}%
=(\overline{x_{i}},\overline{z_{i}})$ and $e_{j}=(\overline{x_{j}}%
,\overline{z_{j}})$ are disjoint. Hence, $I_{2}$ is an instance of the
\textbf{DMSAT} problem, see Example \ref{exdusat}. It remains to prove that
$I_{1}$ and $I_{2}$ are equivalent.

Assume that $I_{1}$ is a positive instance of the \textbf{DSAT} problem. There
exists a truth assignment
\[
\Phi_{1}:\{x_{1},\overline{x_{1}},...,x_{n},\overline{x_{n}}\}\longrightarrow
\{0,1\}
\]
which satisfies all clauses of $C$. Extract $\Phi_{1}$ to a truth assignment
\[
\Phi_{2}:\{x_{1},\overline{x_{1}},...,x_{n},\overline{x_{n}},z_{1}%
,\overline{z_{1}},...,z_{n},\overline{z_{n}}\}\longrightarrow\{0,1\}
\]
by defining $\Phi_{2}(z_{i})=1-\Phi_{1}(x_{i})$ for each $1\leq i\leq n$.
Clearly, $\Phi_{2}$ is a truth assignment which satisfies all clauses of
$C^{\prime}\cup D\cup E$. Hence, $I_{2}$ is a positive instance of the
\textbf{DMSAT} problem.

Assume $I_{2}$ is a positive instance of the \textbf{DMSAT} problem. There
exists a truth assignment
\[
\Phi_{2}:\{x_{1},\overline{x_{1}},...,x_{n},\overline{x_{n}},z_{1}%
,\overline{z_{1}},...,z_{n},\overline{z_{n}}\}\longrightarrow\{0,1\}
\]
that satisfies all clauses of $C^{\prime}\cup D\cup E$. For every $1\leq i\leq
n$ it holds that $\Phi_{2}(z_{i})=1-\Phi_{2}(x_{i})$, or otherwise one of
$d_{i}$ and $e_{i}$ is not satisfied. Therefore, $I_{1}$ is a positive
instance of the \textbf{DSAT} problem.
\end{proof}

\begin{example}
\label{exdusat} The following contains an instance of the \textbf{DSAT}
problem and an equivalent instance of the \textbf{DMSAT} problem.

$I_{1}=(X,C_{1})$, where $X=\{x_{1},...,x_{9}\}$ and $C_{1}=\{\{x_{1}%
,\overline{x_{2}},x_{3}\},\{x_{1},x_{6},x_{4}\},\newline\{x_{1},x_{7}%
,x_{5}\},\{\overline{x_{3}},\overline{x_{4}},x_{5}\},\{x_{2},\overline{x_{3}%
},x_{8}\},\{\overline{x_{1}},\overline{x_{4}},x_{9}\},\{\overline{x_{3}}%
,x_{6}\},\{x_{3},\overline{x_{6}}\},\{\overline{x_{3}},x_{7}\},\newline%
\{x_{3},\overline{x_{7}}\},\{x_{4},x_{8}\},\{\overline{x_{4}},\overline{x_{8}%
}\},\{x_{5},x_{9}\},\{\overline{x_{5}},\overline{x_{9}}\}\}$.

$I_{2}=(Z,C_{2})$, where $Z=\{x_{1},...,x_{9},z_{1},...,z_{9}\}$ and
$C_{2}=\{\{x_{1},z_{2},x_{3}\},\newline\{x_{1},x_{6},x_{4}\},\{x_{1}%
,x_{7},x_{5}\},\{z_{3},z_{4},x_{5}\},\{x_{2},z_{3},x_{8}\},\{z_{1},z_{4}%
,x_{9}\},\{z_{3},x_{6}\},\{x_{3},z_{6}\},\newline\{z_{3},x_{7}\},\{x_{3}%
,z_{7}\},\{x_{4},x_{8}\},\{z_{4},z_{8}\},\{x_{5},x_{9}\},\{z_{5}%
,z_{9}\},\{x_{1},z_{1}\},\{\overline{x_{1}},\overline{z_{1}}\},\{x_{2}%
,z_{2}\},\newline\{\overline{x_{2}},\overline{z_{2}}\},\{x_{3},z_{3}%
\},\{\overline{x_{3}},\overline{z_{3}}\},\{x_{4},z_{4}\},\{\overline{x_{4}%
},\overline{z_{4}}\},\{x_{5},z_{5}\},\{\overline{x_{5}},\overline{z_{5}%
}\},\{x_{6},z_{6}\},\{\overline{x_{6}},\overline{z_{6}}\},\newline%
\{x_{7},z_{7}\},\{\overline{x_{7}},\overline{z_{7}}\},\{x_{8},z_{8}%
\},\{\overline{x_{8}},\overline{z_{8}}\},\{x_{9},z_{9}\},\{\overline{x_{9}%
},\overline{z_{9}}\}\}$.
\end{example}

\section{\textbf{NP-}Complete Results for Recognizing Generating Subgraphs}

The $\mathbf{GS}$ problem is defined as follows. \newline\textit{Input:} A
graph $G$ and an induced subgraph $B$.\newline\textit{Question:} Is $B$ generating?

The $\mathbf{GS}$ problem is known to be \textbf{NP-}complete \cite{bnz:wcc4}.
In this section we prove that it remains \textbf{NP-}complete for two
restricted cases: bipartite graphs with girth at least $6$, and graphs without
cycles of lengths $3$ and $5$.

\subsection{Bipartite Graphs with Girth at Least 6}

\begin{theorem}
\cite{lt:complexity} \label{gsg6bnpc} The following problem is \textbf{NP-}%
complete.\newline\textit{Input}: A graph $G$ with girth at least $6$, and a
subgraph $B$ of $G$.\newline\textit{Question}: Is $B$ generating?
\end{theorem}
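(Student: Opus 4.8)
The plan is to establish membership in \textbf{NP} directly and then prove hardness by a reduction from \textbf{MONOTONE SAT} (Theorem \ref{monotonesat}). For membership, note that a witness is simply an independent set $S$: given $S$ one checks in polynomial time that $S$, $S\cup B_{X}$ and $S\cup B_{Y}$ are independent and that each of $S\cup B_{X}$ and $S\cup B_{Y}$ dominates $V(G)$ (equivalently, is maximal). So the problem lies in \textbf{NP}, and the content is the hardness direction.

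For the reduction I would start from an instance $(X,C_{1},C_{2})$ of \textbf{MONOTONE SAT} and build, in time polynomial in the instance size, a pair $(G,B)$ with $G$ of girth at least $6$. I would take $B=K_{1,1}$ on an edge $ab$, with $B_{X}=\{a\}$ and $B_{Y}=\{b\}$; this is legitimate because the only complete bipartite graphs compatible with girth at least $6$ are stars (any $K_{m,n}$ with $m,n\ge 2$ contains a $C_{4}$), and an edge is the simplest such choice. The guiding idea is that a witness $S$ must be independent, must avoid $N[a]\cup N[b]$, and must make both $S\cup\{a\}$ and $S\cup\{b\}$ dominating. I would hang the positive-clause vertices off $a$ and the negative-clause vertices off $b$: a clause vertex attached near $a$ is dominated for free when we adjoin $a$, but must be dominated by $S$ when we adjoin $b$ instead. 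Thus maximality of $S\cup\{b\}$ forces every clause of $C_{1}$ to be satisfied and, symmetrically, maximality of $S\cup\{a\}$ forces every clause of $C_{2}$ to be satisfied. A per-variable selection gadget forces $S$ to encode a genuine truth assignment, offering a ``true'' vertex $T_{i}$ and a ``false'' vertex $F_{i}$, with $T_{i}$ joined (through the gadget) to the positive-clause vertices containing $x_{i}$ and $F_{i}$ joined to the negative-clause vertices containing $\overline{x_{i}}$, so that ``$T_{i}\in S$'' means ``$x_{i}$ true''.

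The hard part will be keeping the girth at least $6$, and this is where I expect the real work to lie. Naive edges create short cycles: two positive-clause vertices sharing both the hub $a$ and a common literal vertex $T_{i}$ produce a $C_{4}$, and any exclusivity gadget that forbids $T_{i}$ and $F_{i}$ from both entering $S$ by an edge $T_{i}F_{i}$ tends to spawn triangles, $C_{4}$'s or $C_{5}$'s through the hubs. I would resolve this by subdividing the clause-to-hub and literal-to-clause connections with paths of carefully chosen length, and by realizing the mutual-exclusion constraint with a short even cycle rather than a triangle, so that every potential short cycle is stretched to length at least $6$. Each subdivision shifts the domination pattern along the inserted path, so after fixing the lengths I must re-verify every maximality condition; getting the parities right so that the satisfiability logic survives subdivision while no $C_{3}$, $C_{4}$ or $C_{5}$ remains is the main obstacle, and it is what must be checked most carefully.

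Finally I would prove the equivalence in both directions. Given a satisfying assignment, I define $S$ from the selected literal vertices (together with the subdivision vertices the gadgets force in), and verify that $S$ is independent, disjoint from $N[a]\cup N[b]$, and that both $S\cup\{a\}$ and $S\cup\{b\}$ dominate $V(G)$, the latter using that every clause is satisfied. Conversely, from an arbitrary witness $S$ I read off the assignment determined by which literal vertices lie in $S$; the selection gadget guarantees this assignment is consistent, and the two maximality conditions force all clauses of $C_{1}$ and of $C_{2}$ to be satisfied. Hence $B$ is generating if and only if the \textbf{MONOTONE SAT} instance is satisfiable, which together with membership in \textbf{NP} yields the claimed \textbf{NP}-completeness.
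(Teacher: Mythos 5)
Your proposal is a plan rather than a proof: the entire difficulty of the theorem is concentrated in the step you explicitly defer. You correctly identify that a naive encoding of \textbf{MONOTONE SAT} creates triangles, $C_{4}$'s and $C_{5}$'s, and you propose to fix this by subdividing the clause-to-hub and literal-to-clause connections and by replacing the $T_{i}F_{i}$ exclusion edge with an even cycle, but you never specify the path lengths, never exhibit the resulting graph, and never carry out the re-verification you yourself flag as ``the main obstacle.'' This is a genuine gap, not a presentational one: each subdivision vertex is a degree-two vertex that must itself be dominated in \emph{both} maximal independent sets $S\cup\{a\}$ and $S\cup\{b\}$, which typically forces one of its two neighbours into every witness $S$ independently of the truth assignment and thereby collapses the intended ``clause dominated iff clause satisfied'' correspondence. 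Whether there is a choice of parities that simultaneously kills all short cycles and preserves the logic is exactly the content of the theorem, and it is absent. A secondary concern is your choice $B=K_{1,1}$: with that choice you are really proving \textbf{NP}-completeness of relating-edge recognition for girth at least $6$, a statement this paper nowhere asserts (the known relating-edge hardness results are for graphs without $C_{4}$ and $C_{5}$, and for bipartite graphs, neither of which is the girth-$6$ class), so you would be proving something potentially stronger than needed with no evidence it is true.

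For contrast, the paper obtains the statement as an instance of its Theorem~\ref{bg6npc} and avoids subdivision entirely. It first massages the SAT side: the reduction starts from \textbf{DMSAT}, whose defining restrictions (every two clauses of $C_{1}$ share at most one literal, the clauses of $C_{2}$ are pairwise disjoint two-literal clauses) are precisely what guarantee that the direct, unsubdivided construction contains no $C_{4}$, hence has girth at least $6$ and is even bipartite. It also takes $B$ to be the star $G[\{x\}\cup\{y_{j}:1\leq j\leq m\}]$ with one leaf per positive clause, attaching the negative-clause vertices $v'_{j}$ directly to the centre $x$; the variable gadget is just the edge $u_{i}u'_{i}$. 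In short, the paper pushes the girth difficulty into the choice of source problem (Section~2) rather than into graph-side gadget engineering, which is why its equivalence argument is a few lines while yours still requires the construction to be invented.
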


Theorem \ref{gsg6bnpc} is an instance of Theorem \ref{bg6npc}.

\begin{theorem}
\label{bg6npc} The following problem is \textbf{NP-}complete.\newline%
\textit{Input}: A bipartite graph $G$ with girth at least $6$, and a subgraph
$B$ of $G$.\newline\textit{Question}: Is $B$ generating?
\end{theorem}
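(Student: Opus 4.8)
The plan is to follow the two-step template standard for results of this kind: first observe that the problem lies in \textbf{NP}, and then establish hardness by a polynomial reduction from the \textbf{DMSAT} problem, whose \textbf{NP-}completeness was just established in Theorem \ref{dusat}. Membership in \textbf{NP} is immediate: a witness $S$ for $B$, together with the bipartition $(B_X,B_Y)$, can be guessed, and checking that $S$ is independent and that both $S\cup B_X$ and $S\cup B_Y$ are maximal independent sets of $G$ is a polynomial verification.

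For the reduction, given a \textbf{DMSAT} instance $(X,C_{1}\cup C_{2})$ with $X=\{x_{1},\dots,x_{n}\}$, I would build a graph $G$ together with a distinguished subgraph $B$ so that a witness for $B$ encodes a satisfying truth assignment and conversely. The construction has three kinds of pieces. A fixed \emph{core} built from $B$, for which I would take $B$ to be a single edge, i.e. $B=K_{1,1}$ with $B_X=\{x_0\}$ and $B_Y=\{y_0\}$; its role is to force any witness $S$ to be a maximal independent set of the remaining graph that in addition dominates two prescribed sets of ``demand'' vertices attached to $x_0$ and to $y_0$. A \emph{variable gadget} for each $x_{i}$, designed so that every witness must, for each $i$, commit to one of two local states, to be read as $\Phi(x_{i})=1$ or $\Phi(x_{i})=0$. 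And a \emph{clause gadget} for each clause of $C_1\cup C_2$, consisting of a vertex (or short path) joined to the literal vertices occurring in that clause, arranged so that the associated demand vertex is dominated by $S$ precisely when the clause is satisfied by the assignment encoded by $S$. The literal-to-clause connections would be realized through subdivided paths of suitable length, both to fix the parity needed for a proper $2$-colouring and to keep all cycles long.

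With the gadgets in place, correctness splits into the two routine directions. Given a satisfying assignment $\Phi$, I would read off the states of the variable gadgets, combine the induced selection with one vertex per clause gadget, and check that the resulting set $S$ is independent, dominates every demand vertex (because every clause is satisfied), and extends to the two maximal independent sets $S\cup\{x_0\}$ and $S\cup\{y_0\}$; hence $B$ is generating. Conversely, from a witness $S$ I would argue that the core forces $S$ to respect each variable gadget's two-state structure, yielding a well-defined assignment $\Phi$, and that domination of the clause demand vertices forces every clause of $C_1\cup C_2$ to be satisfied.

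The main obstacle, and the reason \textbf{DMSAT} rather than plain \textbf{MONOTONE SAT} is the right starting point, is the simultaneous requirement that $G$ be bipartite and of girth at least $6$; since a bipartite graph has no odd cycles, the only thing left to exclude is a $4$-cycle. A $4$-cycle would arise from two clause vertices sharing two common literal vertices, and this is exactly what the two \textbf{DMSAT} conditions prohibit, namely that every two clauses of $C_{1}$ share at most one literal and that every two clauses of $C_{2}$ are disjoint; moreover the monotone separation of $C_{1}$ (positive) from $C_{2}$ (negative) is what lets me place the literal and clause vertices consistently on the two colour classes. The delicate bookkeeping is to choose the subdivision lengths inside the variable and clause gadgets so that no short cycle is created at the junctions with the core, while preserving the literal--clause incidence structure on which correctness depends. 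Once this is verified, $G$ is a bipartite graph of girth at least $6$ and the reduction is complete. Finally, since every graph produced is in particular a graph of girth at least $6$, Theorem \ref{bg6npc} contains Theorem \ref{gsg6bnpc} as the promised special case.
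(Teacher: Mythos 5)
Your high-level plan matches the paper's: membership in \textbf{NP} is routine, and hardness comes from a reduction from \textbf{DMSAT}, with variable gadgets consisting of matched pairs $u_i,u_i'$ encoding the truth value and clause vertices that must be dominated by the chosen literal vertices. You also correctly identify why \textbf{DMSAT} is the right source problem: in a bipartite graph only $C_4$ needs to be excluded, and the conditions that two clauses of $C_1$ share at most one literal and that clauses of $C_2$ are pairwise disjoint are exactly what kill the $4$-cycles through pairs of clause vertices.

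However, there is a genuine gap at the one point where the construction is actually delicate: your choice of $B=K_{1,1}$ and the unspecified ``subdivided paths of suitable length'' connecting the core to the clause gadgets. With $B=\{x_0,y_0\}$ you face a dichotomy. If a clause vertex $v_j$ is adjacent to $x_0$ and to its literal vertices, then two clauses of $C_1$ sharing a literal $u_i$ produce the $4$-cycle $(x_0,v_j,u_i,v_{j'})$, so girth $6$ fails. If instead you subdivide and place $v_j$ at distance $2$ from $x_0$ via an intermediate vertex, then $v_j\notin N[B]$, so nothing prevents a witness $S$ from simply containing $v_j$ itself; $v_j$ then dominates both itself and the intermediate vertex, and the clause constraint evaporates. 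The paper escapes this trap precisely by \emph{not} taking $B$ to be an edge: it sets $B=G[\{x\}\cup\{y_1,\dots,y_m\}]$, a star whose leaves $y_j$ are the intermediate vertices, so that each $v_j$ lies in $N(B)$ (hence is forbidden from the witness and must be dominated from $N_2(B)$, i.e., by literal vertices) while the $v_j$'s are kept at distance $2$ from each other through $B$, avoiding the $4$-cycles. Your proposal defers exactly this issue to ``delicate bookkeeping'' to be ``verified'' later, but it is the crux of the theorem; moreover, carrying out your plan with $B=K_{1,1}$ would prove \textbf{NP-}completeness of recognizing relating edges in bipartite graphs of girth at least $6$, a strictly stronger statement whose status the paper does not claim to settle. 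As written, the reduction is a template rather than a proof, and the specific core you chose is the one configuration for which the template is not known to work.
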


\begin{proof}
The problem is obviously \textbf{NP}. We prove \textbf{NP-}completeness by
showing a reduction from the \textbf{DMSAT} problem. Let
\[
I_{1}=(X=\{x_{1},...,x_{n}\},C=C_{1}\cup C_{2})
\]
be an instance of the \textbf{DMSAT} problem, where $C_{1}=\{c_{1}%
,...,c_{m}\}$ and $C_{2}=\{c_{1}^{\prime},...,c_{m^{\prime}}^{\prime}\}$ are
sets of clauses. Every clause of $C_{1}$ contains 2 or $3$ variables, and
every clause of $C_{2}$ contains $2$ negations of variables. Every $2$ clauses
of $C_{1}$ have at most one literal in common, and every two clauses of
$C_{2}$ are disjoint. Define a graph $G$ as follows:
\begin{gather*}
V\left(  G\right)  =\{x\}\cup\{y_{j}:1\leq j\leq m\}\cup\{v_{j}:1\leq j\leq
m\}\cup\{v_{j}^{\prime}:1\leq j\leq m^{\prime}\}\cup\\
\{u_{i}:1\leq i\leq n\}\cup\{u_{i}^{\prime}:1\leq i\leq n\},
\end{gather*}%
\begin{gather*}
E\left(  G\right)  =\{xy_{j}:1\leq j\leq m\}\cup\{y_{j}v_{j}:1\leq j\leq
m\}\cup\{xv_{j}^{\prime}:1\leq j\leq m^{\prime}\}\cup\\
\{v_{j}u_{i}:x_{i}\ \text{appears\ in}\ c_{j}\}\cup\{v_{j}^{\prime}%
u_{i}^{\prime}:\overline{x_{i}}\ \text{appears\ in}\ c_{j}^{\prime}%
\}\cup\{u_{i}u_{i}^{\prime}:1\leq i\leq n\}\newline.
\end{gather*}
\newline Clearly, $G$ is bipartite, and the vertex sets of its bipartition
are
\[
\{u_{i}:1\leq i\leq n\}\cup\{y_{j}:1\leq j\leq m\}\cup\{v_{j}^{\prime}:1\leq
j\leq m^{\prime}\}
\]
and
\[
\{v_{j}:1\leq j\leq m\}\cup\{x\}\cup\{u_{i}^{\prime}:1\leq i\leq n\}.
\]

Since $C_{1}$ does not contain two clauses with common two literals, and the
clauses of $C_{2}$ are pairwise disjoint, $G$ does not contain cycles of
length $4$. Hence, its girth is at least $6$. Let $B = G[\{x\}\cup
\{y_{j}:1\leq j\leq m\}]$, and let $I_{2}=(G,B)$ be an instance of the
$\mathbf{GS}$ problem. It remains to prove that $I_{1}$ and $I_{2}$ are equivalent.

Assume that $I_{1}$ is a positive instance of the \textbf{DMSAT} problem. Let
\[
\Phi:\{x_{1},\overline{x_{1}},...,x_{n},\overline{x_{n}}\}\longrightarrow
\{0,1\}
\]
be a truth assignment which satisfies all clauses of $C$. Let
\[
S=\{u_{i}:\Phi(x_{i})=1\}\cup\{u_{i}^{\prime}:\Phi(x_{i})=0\}.
\]
Obviously, $S$ is independent. Since $\Phi$ satisfies all clauses of $C$,
every vertex of
\[
\{v_{j}:1\leq j\leq m\}\cup\{v_{j}^{\prime}:1\leq j\leq m^{\prime}\}
\]
is adjacent to a vertex of $S$. Hence, $S$ is a witness that $B$ is a
generating subgraph of $G$. Therefore, $I_{2}$ is positive.

On the other hand, assume that $I_{2}$ is a positive instance of the
$\mathbf{GS}$ problem. Let $S$ be a witness of $B$. Since $S$ is a maximal
independent set of
\[
\{u_{i}:1\leq i\leq n\}\cup\{u_{i}^{\prime}:1\leq i\leq n\},
\]
exactly one of $u_{i}$ and $u_{i}^{\prime}$ belongs to $S$, for every $1\leq
i\leq n$. Let
\[
\Phi:\{x_{1},\overline{x_{1}},...,x_{n},\overline{x_{n}}\}\longrightarrow
\{0,1\}
\]
be a truth assignment defined by: $\Phi(x_{i})=1\iff u_{i}\in S$. The fact
that $S$ dominates
\[
\{v_{j}:1\leq j\leq m\}\cup\{v_{j}^{\prime}:1\leq j\leq m^{\prime}\}
\]
implies that all clauses of $C$ are satisfied by $\Phi$. Therefore, $I_{1}$ is
a positive instance of the \textbf{DMSAT} problem.
\end{proof}

\begin{figure}[h]
\setlength{\unitlength}{1.0cm} \begin{picture}(20,12)\thicklines
\put(6,5){\circle*{0.1}}
\multiput(3,6.5)(0.75,0){9}{\circle*{0.1}}
\multiput(3,8)(0.75,0){9}{\circle*{0.1}}
\multiput(2.25,9.5)(0.75,0){5}{\circle*{0.1}}
\put(2.25,9.5){\circle*{0.25}}
\put(3.75,9.5){\circle*{0.25}}
\put(6.75,9.5){\circle*{0.25}}
\put(7.5,9.5){\circle*{0.25}}
\put(9.75,9.5){\circle*{0.25}}
\multiput(6.75,9.5)(0.75,0){5}{\circle*{0.1}}
\put(6,6.5){\circle*{0.1}}
\put(5.7,5){\makebox(0,0){$x$}}
\put(2.7,6.5){\makebox(0,0){$y_{1}$}}
\put(3.45,6.5){\makebox(0,0){$y_{2}$}}
\put(4.2,6.5){\makebox(0,0){$y_{3}$}}
\put(4.95,6.5){\makebox(0,0){$y_{4}$}}
\put(5.7,6.5){\makebox(0,0){$y_{5}$}}
\put(6.45,6.5){\makebox(0,0){$y_{6}$}}
\put(7.2,6.5){\makebox(0,0){$y_{7}$}}
\put(7.95,6.5){\makebox(0,0){$y_{8}$}}
\put(8.7,6.5){\makebox(0,0){$y_{9}$}}
\put(2.7,8){\makebox(0,0){$v_{1}$}}
\put(3.45,8){\makebox(0,0){$v_{2}$}}
\put(4.2,8){\makebox(0,0){$v_{3}$}}
\put(4.95,8){\makebox(0,0){$v_{4}$}}
\put(5.7,8){\makebox(0,0){$v_{5}$}}
\put(6.45,8){\makebox(0,0){$v_{6}$}}
\put(7.2,8){\makebox(0,0){$v_{7}$}}
\put(7.95,8){\makebox(0,0){$v_{8}$}}
\put(8.7,8){\makebox(0,0){$v_{9}$}}
\put(1.95,9.5){\makebox(0,0){$u_{1}$}}
\put(2.7,9.5){\makebox(0,0){$u_{2}$}}
\put(3.45,9.5){\makebox(0,0){$u_{3}$}}
\put(4.2,9.5){\makebox(0,0){$u_{4}$}}
\put(4.95,9.5){\makebox(0,0){$u_{5}$}}
\put(7.05,9.5){\makebox(0,0){$u_{6}$}}
\put(7.8,9.5){\makebox(0,0){$u_{7}$}}
\put(8.55,9.5){\makebox(0,0){$u_{8}$}}
\put(9.3,9.5){\makebox(0,0){$u_{9}$}}
\put(10.15,9.5){\makebox(0,0){$u_{10}$}}
\put(6,5){\line(-2,1){3}}
\put(6,5){\line(-3,2){2.25}}
\put(6,5){\line(-1,1){1.5}}
\put(6,5){\line(-1,2){0.75}}
\put(6,5){\line(0,1){1.5}}
\put(6,5){\line(1,2){0.75}}
\put(6,5){\line(1,1){1.5}}
\put(6,5){\line(3,2){2.25}}
\put(6,5){\line(2,1){3}}
\multiput(3,6.5)(0.75,0){9}{\line(0,1){1.5}}
\put(3,8){\line(-1,2){0.75}}
\put(3,8){\line(1,1){1.5}}
\put(3,8){\line(3,2){2.25}}
\put(3.75,8){\line(-1,1){1.5}}
\put(3.75,8){\line(-1,2){0.75}}
\put(4.5,8){\line(-1,1){1.5}}
\put(4.5,8){\line(-1,2){0.75}}
\put(4.5,8){\line(3,2){2.25}}
\put(5.25,8){\line(-1,1){1.5}}
\put(5.25,8){\line(-1,2){0.75}}
\put(6,8){\line(-2,1){3}}
\put(6,8){\line(1,1){1.5}}
\put(6,8){\line(2,1){3}}
\put(6.75,8){\line(-2,1){3}}
\put(6.75,8){\line(1,1){1.5}}
\put(6.75,8){\line(2,1){3}}
\put(7.5,8){\line(-3,2){2.25}}
\put(7.5,8){\line(-1,2){0.75}}
\put(8.25,8){\line(-1,2){0.75}}
\put(8.25,8){\line(0,1){1.5}}
\put(9,8){\line(0,1){1.5}}
\put(9,8){\line(1,2){0.75}}
\multiput(3,3.5)(1.5,0){5}{\circle*{0.1}}
\multiput(2.25,2)(0.75,0){5}{\circle*{0.1}}
\multiput(6.75,2)(0.75,0){5}{\circle*{0.1}}
\put(3,2){\circle*{0.25}}
\put(4.5,2){\circle*{0.25}}
\put(5.25,2){\circle*{0.25}}
\put(8.25,2){\circle*{0.25}}
\put(9,2){\circle*{0.25}}
\put(2.7,3.5){\makebox(0,0){$v'_{1}$}}
\put(4.2,3.5){\makebox(0,0){$v'_{2}$}}
\put(5.7,3.5){\makebox(0,0){$v'_{3}$}}
\put(7.75,3.5){\makebox(0,0){$v'_{4}$}}
\put(9.25,3.5){\makebox(0,0){$v'_{5}$}}
\put(1.95,2){\makebox(0,0){$u'_{1}$}}
\put(2.7,2){\makebox(0,0){$u'_{2}$}}
\put(3.45,2){\makebox(0,0){$u'_{3}$}}
\put(4.2,2){\makebox(0,0){$u'_{4}$}}
\put(4.95,2){\makebox(0,0){$u'_{5}$}}
\put(7.1,2){\makebox(0,0){$u'_{6}$}}
\put(7.85,2){\makebox(0,0){$u'_{7}$}}
\put(8.6,2){\makebox(0,0){$u'_{8}$}}
\put(9.35,2){\makebox(0,0){$u'_{9}$}}
\put(10.1,2){\makebox(0,0){$u'_{10}$}}
\put(6,5){\line(-2,-1){3}}
\put(6,5){\line(-1,-1){1.5}}
\put(6,5){\line(0,-1){1.5}}
\put(6,5){\line(1,-1){1.5}}
\put(6,5){\line(2,-1){3}}
\put(3,3.5){\line(0,-1){1.5}}
\put(3,3.5){\line(-1,-2){0.75}}
\put(4.5,3.5){\line(0,-1){1.5}}
\put(4.5,3.5){\line(-1,-2){0.75}}
\put(6,3.5){\line(1,-2){0.75}}
\put(6,3.5){\line(-1,-2){0.75}}
\put(7.5,3.5){\line(0,-1){1.5}}
\put(7.5,3.5){\line(1,-2){0.75}}
\put(9,3.5){\line(0,-1){1.5}}
\put(9,3.5){\line(1,-2){0.75}}
\put(2.25,2){\line(0,1){7.5}}
\put(9.75,2){\line(0,1){7.5}}
\put(2.25,5.8){\oval(1.5,9.4)[l]}
\put(2.25,9.5){\oval(1.5,2)[t]}
\put(2.25,2){\oval(1.5,1.8)[b]}
\put(2.73,5.8){\oval(3.35,9.7)[l]}
\put(2.4,9.55){\oval(2.7,2.2)[t]}
\put(2.4,2){\oval(2.7,2.1)[b]}
\put(3.2,5.8){\oval(5.2,10.2)[l]}
\put(2.55,9.5){\oval(3.9,2.8)[t]}
\put(2.55,2.7){\oval(3.9,4)[b]}
\put(3.75,5.8){\oval(7.2,10.7)[l]}
\put(2.7,9.55){\oval(5.1,3.2)[t]}
\put(2.7,2){\oval(5.1,3.1)[b]}
\put(9.75,5.8){\oval(1.5,9.4)[r]}
\put(9.75,9.5){\oval(1.5,2)[t]}
\put(9.75,2){\oval(1.5,1.8)[b]}
\put(9.27,5.8){\oval(3.35,9.7)[r]}
\put(9.6,9.55){\oval(2.7,2.2)[t]}
\put(9.6,2){\oval(2.7,2.1)[b]}
\put(8.8,5.8){\oval(5.2,10.2)[r]}
\put(9.45,9.5){\oval(3.9,2.8)[t]}
\put(9.45,2.7){\oval(3.9,4)[b]}
\put(8.25,5.8){\oval(7.2,10.7)[r]}
\put(9.3,9.55){\oval(5.1,3.2)[t]}
\put(9.3,2){\oval(5.1,3.1)[b]}
\end{picture}
\caption{An example of the reduction from the \textbf{DMSAT} problem to the
\textbf{GS} problem.}%
\label{Fig1}%
\end{figure}
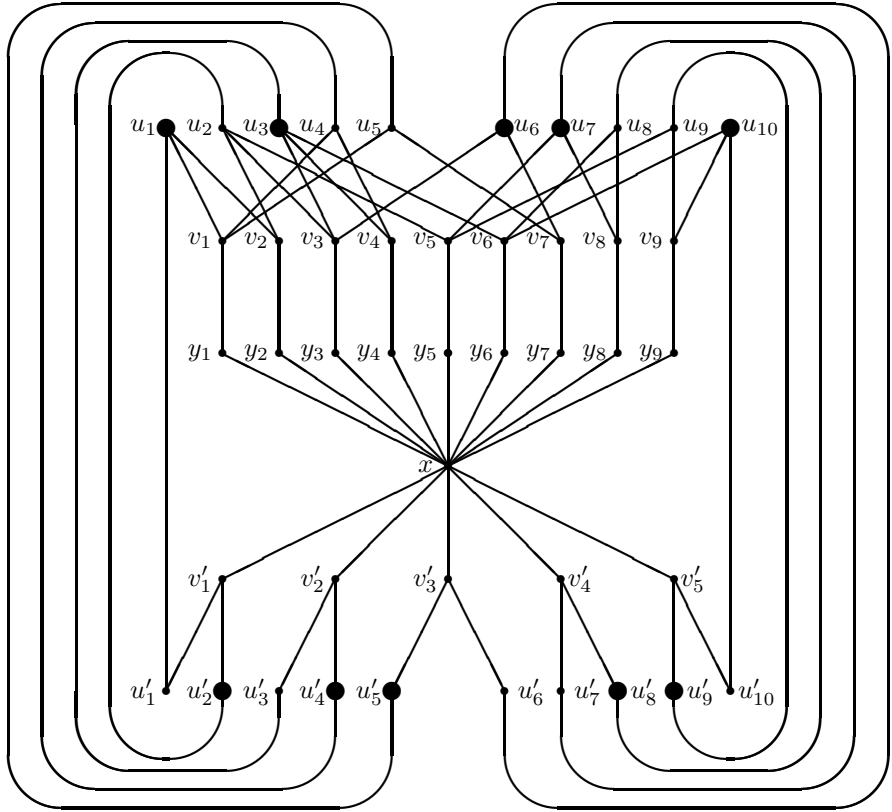

\begin{example}
\label{gsreduction} The following are an instance of the \textbf{DSAT}
problem, an equivalent instance of the \textbf{DMSAT} problem, and an
equivalent instance of the $\mathbf{GS}$ problem. $I_{1}=\{\{x_{1}%
,\overline{x_{3}},x_{5}\}, \{\overline{x_{1}},x_{3},\overline{x_{5}}\},
\{\overline{x_{1}},x_{7},x_{9}\}, \{x_{3},\overline{x_{7}},\overline{x_{9}%
}\}\}$ \newline$I_{2} = \{ \{x_{1},x_{4},x_{5}\}, \{x_{2},x_{3},x_{6}\},
\{x_{2},x_{7},x_{9}\}, \{x_{3},x_{8},x_{10}\}, \{x_{1},x_{2}\}, \{\overline
{x_{1}},\overline{x_{2}}\}, \newline\{x_{3},x_{4}\}, \{\overline{x_{3}%
},\overline{x_{4}}\}, \{x_{5},x_{6}\}, \{\overline{x_{5}},\overline{x_{6}}\},
\{x_{7},x_{8}\}, \{\overline{x_{7}},\overline{x_{8}}\}, \{x_{9},x_{10}\},
\{\overline{x_{9}},\overline{x_{10}}\}\}$ \newline$I_{3}$=$(G,B)$, where $G$
is the graph shown in Figure \ref{Fig1}, $B$=$G[\{x\}\cup\{y_{j}:1\leq
j\leq9\}]$.

The instance $I_{2}$ is positive because $x_{1}=x_{3}=x_{6}=x_{7}=x_{10}=1$,
$x_{2}=x_{4}=x_{5}=x_{8}=x_{9}=0$ is a satisfying assignment. The
corresponding witness that $I_{3}$ is positive is the set $\{u_{1}%
,u_{2}^{\prime},u_{3},u_{4}^{\prime},u_{5}^{\prime},u_{6},u_{7},u_{8}^{\prime
},u_{9}^{\prime},u_{10}\}$.
\end{example}

\subsection{Graphs Without Cycles of Lengths 3 and 5}

\begin{theorem}
\label{rebnpc} \cite{lt:complexity} The $\mathbf{GS}$ problem is
\textbf{NP-}complete even in the restricted case that $G$ is bipartite, and
$B$ is $K_{1,1}$.
\end{theorem}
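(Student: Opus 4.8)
The plan is to reduce from the \textbf{MONOTONE SAT} problem, which is \textbf{NP-}complete by Theorem \ref{monotonesat}. Membership in \textbf{NP} is immediate: a witness $S$ is a polynomial-size certificate, and one verifies in polynomial time that $S\cup\{x\}$ and $S\cup\{y\}$ are both maximal independent sets. For the hardness, given a \textbf{MONOTONE SAT} instance with variables $x_{1},\dots,x_{n}$, a set $C_{1}$ of positive clauses, and a set $C_{2}$ of negative clauses, I would build $G$ as follows. Take two adjacent vertices $x$ and $y$; for each variable $x_{i}$ a pair $t_{i},f_{i}$ joined by an edge $t_{i}f_{i}$; for each clause $c_{j}\in C_{1}$ a vertex $p_{j}$ adjacent to $x$ and to every $t_{i}$ with $x_{i}\in c_{j}$; and for each clause $c_{k}\in C_{2}$ a vertex $q_{k}$ adjacent to $y$ and to every $f_{i}$ with $\overline{x_{i}}\in c_{k}$. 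Setting $B=G[\{x,y\}]=K_{1,1}$, the instance of $\mathbf{GS}$ is $(G,B)$. I would first check that $G$ is bipartite by exhibiting the $2$-coloring with sides $\{x\}\cup\{t_{i}:1\le i\le n\}\cup\{q_{k}\}$ and $\{y\}\cup\{f_{i}:1\le i\le n\}\cup\{p_{j}\}$, verifying that every edge crosses the bipartition.

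The construction is engineered so that any witness $S$ must avoid $N(x)\cup N(y)$, which contains $x$, $y$ and all clause vertices $p_{j},q_{k}$; hence $S\subseteq\{t_{i},f_{i}:1\le i\le n\}$. Maximality then forces exactly one of $t_{i},f_{i}$ into $S$ for each $i$: independence rules out choosing both because of the edge $t_{i}f_{i}$, while if neither were chosen then $t_{i}$ would be undominated in $S\cup\{x\}$, since its only neighbors are $f_{i}$ and certain $p_{j}$, none of which lie in $S\cup\{x\}$. Thus $S$ encodes a truth assignment $\Phi$ via $\Phi(x_{i})=1\iff t_{i}\in S$. The two maximality requirements split the verification: in $S\cup\{y\}$ each $p_{j}$, being non-adjacent to $y$, must be dominated by some $t_{i}\in S$ with $x_{i}\in c_{j}$, which says the positive clause $c_{j}$ is satisfied; symmetrically, in $S\cup\{x\}$ each $q_{k}$ must be dominated by some $f_{i}\in S$ with $\overline{x_{i}}\in c_{k}$, which says the negative clause $c_{k}$ is satisfied. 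Because the \emph{same} set $S$ is used for both, a single consistent assignment satisfies both families.

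For the forward direction I would take a satisfying assignment $\Phi$, set $S=\{t_{i}:\Phi(x_{i})=1\}\cup\{f_{i}:\Phi(x_{i})=0\}$, and check directly that $S$ is independent, disjoint from $N(x)\cup N(y)$, and that $S\cup\{x\}$ and $S\cup\{y\}$ are maximal: the neighbors of $x$ and of $y$ are dominated by the added endpoint, each unchosen $t_{i}$ or $f_{i}$ is dominated through the edge $t_{i}f_{i}$, and the clause vertices are dominated precisely because $\Phi$ satisfies every clause. The converse is exactly the analysis of the previous paragraph, so $(G,B)$ is a yes-instance if and only if the formula is satisfiable. I expect the main obstacle to be the bookkeeping around maximality rather than any deep idea: one must argue carefully that $S\cup\{x\}$ and $S\cup\{y\}$ are genuinely \emph{dominating} (not merely independent), and in particular that the asymmetric roles of $p_{j}$ and $q_{k}$ force one assignment to satisfy the positive clauses (tested through $y$) and the negative clauses (tested through $x$) simultaneously. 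Confirming that bipartiteness is not spoiled and that variables occurring in no clause still receive a value are the remaining routine points.
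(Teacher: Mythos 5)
Your reduction is correct: the bipartition check, the forced form $S\subseteq\{t_{i},f_{i}\}$ with exactly one of each pair selected, and the clause-domination argument through $p_{j}$ (tested in $S\cup\{y\}$) and $q_{k}$ (tested in $S\cup\{x\}$) all go through. The paper itself does not prove this theorem (it cites \cite{lt:complexity}), but your construction is essentially the same MONOTONE SAT gadgetry the paper uses for Theorems \ref{bg6npc} and \ref{k12c35npc} --- in particular it is the proof of Theorem \ref{k12c35npc} with the $K_{1,2}$ hub $\{z,y_{1},y_{2}\}$ collapsed to the single edge $xy$, which is exactly what is needed to get $B=K_{1,1}$ while keeping $G$ bipartite.
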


Theorem \ref{k12c35npc} is the main result of this subsection.

\begin{theorem}
\label{k12c35npc} The $\mathbf{GS}$ problem is \textbf{NP-}complete even in
the restricted case that $G$ does not contain cycles of lengths $3$ and $5$,
and $B$ is $K_{1,2}$.
\end{theorem}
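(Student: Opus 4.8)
The plan is to prove \textbf{NP-}completeness by a polynomial reduction from the \textbf{DMSAT} problem (Theorem \ref{dusat}); membership in \textbf{NP} is immediate, since a witness $S$ can be guessed and verified in polynomial time. Given a \textbf{DMSAT} instance with variables $X=\{x_1,\dots,x_n\}$, positive clauses $C_1=\{c_1,\dots,c_m\}$ (each of size $2$ or $3$), and negative clauses $C_2=\{c'_1,\dots,c'_{m'}\}$ (each of size $2$), I would build a graph $G$ whose generating $K_{1,2}$ encodes satisfiability. The construction mirrors that of Theorem \ref{bg6npc}, but collapses the large star $B_Y$ into two leaves. Concretely, introduce one center $a$ and two leaves $b_1,b_2$; a vertex $v_j$ for each clause of $C_1$ and a vertex $v'_k$ for each clause of $C_2$; and, for each variable $x_i$, a pair $u_i,u'_i$ joined by an edge $u_i u'_i$. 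Add the edges $a b_1, a b_2$; join $a$ to every $C_2$-clause vertex $v'_k$; join both $b_1$ and $b_2$ to every $C_1$-clause vertex $v_j$; add $v_j u_i$ whenever $x_i\in c_j$ and $v'_k u'_i$ whenever $\overline{x_i}\in c'_k$. Set $B=G[\{a,b_1,b_2\}]$, which is induced and isomorphic to $K_{1,2}$ with center $a$.

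First I would verify the graph class. The vertex set splits as $\{a\}\cup\{v_j\}\cup\{u'_i\}$ versus $\{b_1,b_2\}\cup\{v'_k\}\cup\{u_i\}$, and one checks that every edge crosses this split, so $G$ is bipartite. Hence $G$ has no odd cycle and, in particular, contains neither $C_3$ nor $C_5$, which is exactly the required restriction; the $4$-cycle through $a,b_1,v_j,b_2$ shows that $C_4$ does occur, so $G$ genuinely lies outside the girth-$6$ class of Theorem \ref{bg6npc}. This reduces the entire forbidden-cycle bookkeeping to a one-line bipartiteness check, the cleanest part of the argument.

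The core of the proof is the equivalence, driven by an asymmetry in the neighborhoods: $a$ dominates all $C_2$-clause vertices but no $C_1$-clause vertex, while $\{b_1,b_2\}$ dominates all $C_1$-clause vertices but no $C_2$-clause vertex. For the forward direction, from a satisfying assignment $\Phi$ I would take $S=\{u_i:\Phi(x_i)=1\}\cup\{u'_i:\Phi(x_i)=0\}$ and check that $S\cup\{a\}$ and $S\cup\{b_1,b_2\}$ are both maximal independent: each $v'_k$ is covered by $a$ in the first configuration and by a satisfied negative literal in the second, and dually for each $v_j$. For the converse, given a witness $S$, the independence constraints force $S\subseteq\{u_i,u'_i:1\le i\le n\}$, and maximality of $S\cup\{a\}$ forces at least one of $u_i,u'_i$ into $S$ for every $i$, while independence forces at most one; thus $S$ determines a truth assignment. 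Requiring $v_j$ to be dominated in $S\cup\{a\}$ then certifies each clause of $C_1$, and requiring $v'_k$ to be dominated in $S\cup\{b_1,b_2\}$ certifies each clause of $C_2$.

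The step I expect to be the main obstacle is pinning down the converse direction: proving that the two maximality requirements really do route the two clause families to the two configurations, and in particular that no vertex $v_j$ or $v'_k$ can be dominated for free by $a$, $b_1$, or $b_2$ in the configuration where it is supposed to test $S$. This is precisely where the asymmetric adjacency of $a$ versus $\{b_1,b_2\}$ is essential, and where I would argue most carefully that a witness exists if and only if every clause of $C=C_1\cup C_2$ is satisfied.
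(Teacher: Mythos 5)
Your reduction is correct, but it is wired differently from the paper's, and the difference is worth noting. The paper reduces from \textbf{MONOTONE SAT} and builds a $K_{1,2}$ with center $z$ and leaves $y_{1},y_{2}$, where $y_{1}$ alone carries the positive-clause vertices and $y_{2}$ alone carries the negative-clause vertices; the resulting graph is \emph{not} bipartite (the matching edges $u_{i}u_{i}'$ join two vertices on the same side), so the paper must argue separately that no edge $u_{i}u_{i}'$ lies on a $C_{3}$ or $C_{5}$, using the fact that $u_{i},u_{i}'$ have no common neighbor and neither do $v_{j},v_{j'}'$. You instead attach the negative-clause vertices to the \emph{center} $a$ and the positive-clause vertices to \emph{both} leaves $b_{1},b_{2}$; this parity offset places $u_{i}$ and $u_{i}'$ on opposite sides, so your $G$ is genuinely bipartite and the absence of $C_{3}$ and $C_{5}$ is immediate. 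The equivalence argument is the same in both proofs (the center configuration tests one clause family, the leaf configuration tests the other, and maximality over the $u_{i}u_{i}'$ pairs extracts the assignment), and your verification of it is sound. Your choice of \textbf{DMSAT} as the source problem is valid but unnecessary, since your construction uses none of its extra structure --- \textbf{MONOTONE SAT} suffices, as in the paper. What your version buys is a cleaner forbidden-cycle argument and a nominally stronger statement (\textbf{NP-}completeness already for bipartite $G$ with $B=K_{1,2}$, albeit with $C_{4}$'s present); what the paper's version buys is nothing you lose, so yours is a legitimate simplification.
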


\begin{proof}
The problem is obviously in \textbf{NP}. We prove \textbf{NP-}completeness by
showing a reduction from the \textbf{MONOTONE SAT} problem. Let
\[
I_{1}=(X=\{x_{1},...,x_{n}\},C=C_{1}\cup C_{2})
\]
be an instance of the \textbf{MONOTONE SAT} problem, where $X=\{x_{1}%
,...,x_{n}\}$ is a set of $0-1$ variables, $C_{1}=\{c_{1},...,c_{m}\}$ is a
set of clauses over the literals $\{x_{1},...,x_{n}\}$, and $C_{2}%
=\{c_{1}^{\prime},...,c_{m^{\prime}}^{\prime}\}$ is a set of clauses over the
literals $\{\overline{x_{1}},...,\overline{x_{n}}\}$.

Define a graph $G$ as follows:
\begin{gather*}
V\left(  G\right)  =\{z,y_{1},y_{2}\}\cup\{v_{j}:1\leq j\leq m\}\cup
\{v_{j}^{\prime}:1\leq j\leq m^{\prime}\}\cup\\
\{u_{i}:1\leq i\leq n\}\cup\{u_{i}^{\prime}:1\leq i\leq n\},
\end{gather*}%
\begin{gather*}
E\left(  G\right)  =\{zy_{1},zy_{2}\}\cup\{y_{1}v_{j}:1\leq j\leq
m\}\cup\{y_{2}v_{j}^{\prime}:1\leq j\leq m^{\prime}\}\cup\\
\{v_{j}u_{i}:x_{i}\ \text{appears\ in}\ c_{j}\}\cup\{v_{j}^{\prime}%
u_{i}^{\prime}:\overline{x_{i}}\ \text{appears\ in}\ c_{j}^{\prime}%
\}\cup\{u_{i}u_{i}^{\prime}:1\leq i\leq n\}\newline.
\end{gather*}
\newline The next step is proving that $G$ does not contain cycles of lengths
$3$ and $5$. If the edges $\{u_{i}u_{i}^{\prime}:1\leq i\leq n\}$ are deleted
from $G$, then the graph becomes bipartite, with the vertex sets of
bipartition:
\[
\{u_{i}:1\leq i\leq n\}\cup\{u_{i}^{\prime}:1\leq i\leq n\}\cup\{y_{1}%
,y_{2}\}.
\]
and
\[
\{v_{j}:1\leq j\leq m\}\cup\{v_{j}^{\prime}:1\leq j\leq m^{\prime}%
\}\cup\{z\}.
\]
Hence, it is enough to prove that an edge $u_{i}u_{i}^{\prime}$ is not a part
of a cycle of length $3$ or $5$. The vertices $u_{i}$ and $u_{i}^{\prime}$
have no common neighbors. Therefore, $u_{i}u_{i}^{\prime}$ is not a part of a
triangle. Let $v_{j}\in N(u_{i})\setminus\{u_{i}^{\prime}\}$ and
$v_{j^{\prime}}^{\prime}\in N(u_{i}^{\prime})\setminus\{u_{i}\}$. Since
$v_{j}$ and $v_{j^{\prime}}^{\prime}$ have no common neighbors, $u_{i}%
u_{i}^{\prime}$ is not a part of a cycle of length $5$.

Let $B = G[\{z,y_{1},y_{2}\}]$, and let $I_{2} = (G,B)$ be an instance of the
\textbf{GS} problem. It remains to prove that $I_{1}$ and $I_{2}$ are equivalent.

Assume that $I_{1}$ is a positive instance of the \textbf{MONOTONE SAT}
problem. Let
\[
\Phi:\{x_{1},\overline{x_{1}},...,x_{n},\overline{x_{n}}\}\longrightarrow
\{0,1\}
\]
be a truth assignment which satisfies all clauses of $C$. Let
\[
S=\{u_{i}:\Phi(x_{i})=1\}\cup\{u_{i}^{\prime}:\Phi(x_{i})=0\}.
\]
Obviously, $S$ is independent. Since $\Phi$ satisfies all clauses of $C$,
every vertex of
\[
\{v_{j}:1\leq j\leq m\}\cup\{v_{j}^{\prime}:1\leq j\leq m^{\prime}\}
\]
is adjacent to a vertex of $S$. Hence, $S$ is a witness that $B$ is a
generating subgraph of $G$. Therefore, $I_{2}$ is positive.

On the other hand, assume that $I_{2}$ is a positive instance of the
\textbf{GS} problem. Let $S$ be a witness of $B$. Since $S$ is a maximal
independent set of
\[
\{u_{i}:1\leq i\leq n\}\cup\{u_{i}^{\prime}:1\leq i\leq n\},
\]
exactly one of $u_{i}$ and $u_{i}^{\prime}$ belongs to $S$, for every $1\leq
i\leq n$. Let
\[
\Phi:\{x_{1},\overline{x_{1}},...,x_{n},\overline{x_{n}}\}\longrightarrow
\{0,1\}
\]
be a truth assignment defined by: $\Phi(x_{i})=1\iff u_{i}\in S$. The fact
that $S$ dominates
\[
\{v_{j}:1\leq j\leq m\}\cup\{v_{j}^{\prime}:1\leq j\leq m^{\prime}\}
\]
implies that all clauses of $C$ are satisfied by $\Phi$. Therefore, $I_{1}$ is
a positive instance of the \textbf{MONOTONE SAT} problem.
\end{proof}

\begin{corollary}
\label{kpqc35npc} Let $p \geq1$ and $q \geq2$. The $\mathbf{GS}$ problem is
\textbf{NP-}complete even in the restricted case that $G$ does not contain
cycles of lengths $3$ and $5$, and $B$ is $K_{p,q}$.
\end{corollary}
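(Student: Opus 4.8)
The plan is to reduce from \textbf{MONOTONE SAT} again, reusing and enlarging the graph constructed in the proof of Theorem \ref{k12c35npc}. Starting from an instance $I_1=(X,C_1\cup C_2)$, I would first build exactly the graph $G$ of that proof, in which $B=G[\{z,y_1,y_2\}]$ is an induced $K_{1,2}$ with sides $B_X=\{z\}$ and $B_Y=\{y_1,y_2\}$. To manufacture a copy of $K_{p,q}$, I would then add $p-1$ fresh vertices $z_2,\dots,z_p$ and $q-2$ fresh vertices $y_3,\dots,y_q$ (writing $z_1=z$), and insert every edge $z_ay_b$ with $1\le a\le p$ and $1\le b\le q$, adding no other edges. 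Taking $B_X=\{z_1,\dots,z_p\}$ and $B_Y=\{y_1,\dots,y_q\}$, the subgraph induced on $B_X\cup B_Y$ is precisely $K_{p,q}$: the only edges among these vertices are the ones just inserted, since each $z_a$ and each $y_b$ with $b\ge 3$ has no neighbour outside $B_X\cup B_Y$, while $y_1$ and $y_2$ reach outside only the $v_j$'s and $v_j^{\prime}$'s. Degenerate parameters are absorbed automatically, and the case $p=1,\,q=2$ recovers Theorem \ref{k12c35npc}.

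The step I expect to require the most care is checking that the enlarged graph $G^{\prime}$ still contains no cycle of length $3$ or $5$; note that for $p,q\ge 2$ the subgraph $K_{p,q}$ deliberately introduces $4$-cycles, which are harmless here. The key observation is that $y_1$ and $y_2$ are the only vertices of $B_X\cup B_Y$ with neighbours outside $B_X\cup B_Y$, so every cycle falls into one of three types: a cycle contained in $K_{p,q}$, which is even and hence of length at least $4$; a cycle contained in the original $G$, which is free of $C_3$ and $C_5$ by the proof of Theorem \ref{k12c35npc}; or a \emph{crossing} cycle that leaves and re-enters $B_X\cup B_Y$ only through $y_1$ and $y_2$. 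For the last type the portion inside $K_{p,q}$ is a path between the same-side vertices $y_1,y_2$ and therefore has even length at least $2$, while the shortest external $y_1$--$y_2$ path has the form $y_1\,v_j\,u_i\,u_i^{\prime}\,v_{j^{\prime}}^{\prime}\,y_2$ of length $5$; hence every crossing cycle has length at least $7$. This rules out any new $C_3$ or $C_5$.

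Finally, I would verify that $I_1$ is satisfiable if and only if $B=K_{p,q}$ is generating in $G^{\prime}$, transporting the argument of Theorem \ref{k12c35npc}. For the forward direction the same witness $S=\{u_i:\Phi(x_i)=1\}\cup\{u_i^{\prime}:\Phi(x_i)=0\}$ works: $S$ is independent and nonadjacent to $B_X\cup B_Y$, so both $S\cup B_X$ and $S\cup B_Y$ are independent; the new vertices are dominated inside $K_{p,q}$ (each $y_b\notin B_X$ by $z_1\in B_X$, and each $z_a\notin B_Y$ by $y_1\in B_Y$), while domination of the $v_j$'s and $v_j^{\prime}$'s still follows from $\Phi$ satisfying $C$, making both sets maximal. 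For the converse, any witness $S$ is disjoint from and nonadjacent to $B_X\cup B_Y$, which forces $S\subseteq\{u_i,u_i^{\prime}:1\le i\le n\}$ because every $v_j$ and every $v_j^{\prime}$ is adjacent to $y_1$ or $y_2$; from there the original reasoning (exactly one of $u_i,u_i^{\prime}$ in $S$, and $S$ dominating all $v_j,v_j^{\prime}$) yields a satisfying assignment. The corollary follows.
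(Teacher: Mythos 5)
Your construction is essentially the paper's: both take the graph of Theorem~\ref{k12c35npc}, enlarge the $K_{1,2}$ on $\{z,y_{1},y_{2}\}$ to a $K_{p,q}$ by attaching fresh vertices joined completely across the bipartition and nowhere else, and observe that the enlarged subgraph is generating exactly when the original one is. The paper states this in three lines without verification, whereas you correctly supply the checks that no $C_{3}$ or $C_{5}$ is created and that witnesses transfer; your side-labelling ($|B_{X}|=p$ containing $z$, $|B_{Y}|=q$ containing $y_{1},y_{2}$) is also the natural one for $p\geq 1$, $q\geq 2$.
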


\begin{proof}
We prove \textbf{NP-}completeness by showing a reduction from the
\textbf{MONOTONE SAT} problem. Let $I$ be an instance of the \textbf{MONOTONE
SAT} problem. Let $G$ be the graph constructed in the proof of Theorem
\ref{k12c35npc}, and contains the vertices ${y_{1},y_{2},z}$. Define $z_{1} =
z$, and let $H$ be the graph obtained from $G$ by adding vertices
$y_{3},...,y_{p},z_{2},...,z_{q}$ and edges $\{y_{i}z_{j} : 1 \leq i \leq p,
\ 1 \leq j \leq q\}$. The following conditions are equivalent:

\begin{itemize}
\item $I$ is a positive instance of the \textbf{MONOTONE SAT} problem.

\item The induced subgraph of $G$ with vertices ${y_{1},y_{2},z_{1}}$ is generating.

\item The induced subgraph of $H$ with vertices ${y_{1},...,y_{p}%
,z_{1},...,z_{q}}$ is generating.
\end{itemize}
\end{proof}

\section{Bipartite Graphs Without Cycles of Length 6}

In this section we present efficient algorithms for recognizing generating
subgraphs, recognizing well-covered graphs, and finding the vector space
$WCW(G)$, in the restricted case that the input graph $G$ is bipartite without
cycles of length $6$.

\begin{lemma}
\label{indd2} Let $G$ be a bipartite graph without cycles of length $6$, and
let $B$ be a bipartite subgraph of $G$. Then $N_{2}(B)$ is independent.
\end{lemma}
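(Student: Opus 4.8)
The plan is to argue by contradiction: suppose $N_{2}(B)$ contains two adjacent vertices $a$ and $b$, and from them manufacture a forbidden $6$-cycle in $G$. Since $d(a,B)=2$, a shortest path from $a$ to $V(B)$ has the form $a-p-b_{1}$ with $b_{1}\in V(B)$ and $p\in N(B)\setminus V(B)$; likewise $d(b,B)=2$ yields $b-q-b_{2}$ with $b_{2}\in V(B)$ and $q\in N(B)\setminus V(B)$. Splicing these with the edge $ab$ gives the walk
\[
W:\qquad b_{1}-p-a-b-q-b_{2}
\]
of length $5$ between two vertices of $B$, and the idea is to close $W$ into a cycle using a single edge of $B$.

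The key step is to locate $b_{1}$ and $b_{2}$ inside $B$ by exploiting a proper $2$-colouring $c$ of the bipartite graph $G$. Since $ab\in E(G)$ I may take $c(a)=0$ and $c(b)=1$; propagating outward along $W$ then forces $c(p)=1$, $c(b_{1})=0$, $c(q)=0$, and $c(b_{2})=1$. Hence $c(b_{1})\neq c(b_{2})$, so in particular $b_{1}\neq b_{2}$, and $b_{1},b_{2}$ lie in different sides of the bipartition of $B$. Because $B$ is a complete bipartite subgraph, opposite-side vertices are adjacent, so $b_{1}b_{2}\in E(B)\subseteq E(G)$. (This is the one place where completeness of $B$ is indispensable: the conclusion genuinely fails for an arbitrary bipartite subgraph — for instance an edgeless or a $P_{4}$ subgraph — where $b_{1}$ and $b_{2}$ may receive opposite colours yet be non-adjacent.) Appending $b_{1}b_{2}$ to $W$ produces the closed walk $b_{1}-p-a-b-q-b_{2}-b_{1}$ of length $6$.

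It then remains to verify that these six vertices are pairwise distinct, so that the closed walk is an honest $C_{6}$; this is the main piece of bookkeeping. The colouring already separates the two colour classes $\{a,q,b_{1}\}$ and $\{b,p,b_{2}\}$, so it suffices to rule out coincidences within each class, and these follow from the distance layers: $b_{1},b_{2}$ sit at distance $0$ from $B$, $p,q$ at distance $1$, and $a,b$ at distance $2$, and vertices in different layers cannot coincide. Once distinctness is settled, $b_{1}pabqb_{2}$ is a cycle of length $6$ in $G$, contradicting the hypothesis that $G$ has no $C_{6}$. Therefore no two vertices of $N_{2}(B)$ are adjacent, i.e. $N_{2}(B)$ is independent.

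I expect the only real subtlety to be the systematic elimination of degenerate coincidences — above all the case $b_{1}=b_{2}$, which is ruled out automatically by $c(b_{1})\neq c(b_{2})$, or alternatively by observing that $b_{1}-p-a-b-q-b_{1}$ would be a closed walk of odd length $5$, impossible in a bipartite graph — together with keeping in mind that it is completeness of $B$, and not mere bipartiteness, that supplies the closing edge $b_{1}b_{2}$.
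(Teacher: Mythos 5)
Your proof is correct and follows essentially the same route as the paper: both arguments trace the two adjacent vertices of $N_{2}(B)$ back along length-$2$ shortest paths to opposite colour classes, conclude via the $2$-colouring of $G$ that the two endpoints lie on opposite sides of $B$ and hence are adjacent (the paper phrases this as the decomposition of $N_{2}(B)$ into $N_{2}(B_{X})\cap N_{3}(B_{Y})$ and $N_{3}(B_{X})\cap N_{2}(B_{Y})$, each monochromatic), and close the walk into a forbidden $C_{6}$. Your explicit distinctness check and your observation that completeness of $B$ is indispensable (the lemma's statement says only ``bipartite subgraph,'' yet both your proof and the paper's use the closing edge of the complete bipartite $B$) are accurate refinements, but the underlying argument is the same.
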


\begin{proof}
Denote the vertex sets of bipartition of $B$ by $B_{X}$ and $B_{Y}$. Since $G$
is bipartite, $N(B_{X})\cap N(B_{Y})=\emptyset$. Therefore,
\[
N_{2}(B)=(N_{2}(B_{X})\cap N_{3}(B_{Y}))\cup(N_{3}(B_{X})\cap N_{2}(B_{Y}))
\]
All vertices of $N_{2}(B_{X})\cap N_{3}(B_{Y})$ belong to the same vertex set
of bipartition of $G$. Therefore, this set is independent. Similarly,
$N_{3}(B_{X})\cap N_{2}(B_{Y})$ is independent as well. Assume on the contrary
that there exist two adjacent vertices, $x^{\prime\prime}\in N_{2}(B_{X})\cap
N_{3}(B_{Y})$ and $y^{\prime\prime}\in N_{3}(B_{X})\cap N_{2}(B_{Y})$. Hence,
there exist vertices $x^{\prime}\in N(x^{\prime\prime})\cap N(B_{X})\cap
N_{2}(B_{Y})$ and $y^{\prime}\in N(y^{\prime\prime})\cap N_{2}(B_{X})\cap
N(B_{Y})$. Let $x\in N(x^{\prime})\cap B_{X}$ and $y\in N(y^{\prime})\cap
B_{Y}$. There exists a cycle of length $6$ in $G$, $(x^{\prime\prime
},x^{\prime},x,y,y^{\prime},y^{\prime\prime})$, which is a contradiction.
Consequently, $N_{2}(B)$ is independent.
\end{proof}

\begin{theorem}
\label{gsbc6p} There exists an $O(|V(G)|^{2})$ algorithm which solves the
following problem:\newline Input: A bipartite graph $G$ without cycles of
length $6$, and an induced subgraph $B$ of $G$.\newline Question: Is $B$ generating?
\end{theorem}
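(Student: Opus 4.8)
The task is to decide, in $O(|V(G)|^2)$ time, whether an induced subgraph $B$ of a bipartite graph $G$ without $C_6$ is generating. Recall that $B$ is generating precisely when there exists an independent set $S$ (a witness) such that both $S \cup B_X$ and $S \cup B_Y$ are maximal independent sets of $G$. Let me think about what constraints a witness must satisfy.

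First, $S \cup B_X$ and $S \cup B_Y$ being independent forces $S$ to be independent, $S$ to avoid $N(B_X) \cup N(B_Y)$ (so $S \cap N[B] = \emptyset$, meaning $S \subseteq V(G) \setminus N[B]$), and $B$ itself to be an induced complete bipartite subgraph.

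Second, maximality: every vertex of $G$ outside $S \cup B_X$ must have a neighbor in $S \cup B_X$ (and similarly for $B_Y$). The vertices of $B_Y$ must be dominated by $S \cup B_X$; since $B$ is complete bipartite, $B_X$ already dominates $B_Y$, so this is automatic — and symmetrically. The real content is that $S$, together with $B$, must dominate everything in $N_{\geq 1}(B)$ that is not in $B$ and not dominated by $B_X$ or $B_Y$ directly.

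**The plan.**

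The key is Lemma~\ref{indd2}: $N_2(B)$ is independent. Here is why that is decisive. The witness $S$ lives in $V(G) \setminus N[B]$. I claim that every vertex of $N_1(B)$ is automatically dominated once we require maximality of $S\cup B_X$ and $S\cup B_Y$: a vertex $v \in N(B_X)$ is dominated by $B_X$ in $S \cup B_X$, but in $S \cup B_Y$ it must be dominated by $S \cup B_Y$; if $v \in N(B_X) \setminus N(B_Y)$ then $v$ must have a neighbor in $S$. Thus certain vertices of $N_1(B)$ impose a \emph{covering} requirement on $S$. Meanwhile every vertex of $N_{\geq 2}(B)$ that lies outside $S$ must also be dominated by $S$ (it cannot be dominated by $B$). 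The plan is to show that these domination requirements, combined with independence of $S$, reduce to choosing, inside the independent set $N_2(B)$ and its interaction with $N_1(B)$, a set that covers all the mandatory vertices. Because $N_2(B)$ is independent (Lemma~\ref{indd2}), there are no independence conflicts \emph{within} the natural candidate layer for $S$, so the greedy choice ``put every available vertex of $N_2(B)$ into $S$'' is forced and safe: adding a vertex of $N_2(B)$ to $S$ never creates an edge with another vertex of $N_2(B)$, and it can only help domination.

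**Carrying it out.**

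First I would argue that without loss of generality the witness $S$ can be taken to consist of vertices at distance exactly $2$ from $B$ together with a maximal independent set of the ``far'' part $G - N_2[B]$. The vertices in $N_1(B)$ can never be in $S$ (they are adjacent to $B$), so the first layer $S$ can reach is $N_2(B)$. The crucial reduction: a generating witness exists if and only if the specific set $S_0 := N_2(B) \cup M$, where $M$ is obtained greedily from the components far from $B$, dominates all mandatory vertices of $N_1(B)$ and is independent. Lemma~\ref{indd2} guarantees $N_2(B)$ is independent; I would then check that $N_2(B)$ has no edges to the mandatory vertices' forced neighbors improperly, and that taking \emph{all} of $N_2(B)$ is optimal for domination since these vertices are pairwise non-adjacent and each can only cover more by being included. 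The algorithm thus becomes: build $N[B], N_1(B), N_2(B)$; verify $B$ is induced complete bipartite; set $S := N_2(B)$; extend $S$ greedily into the rest of the graph; then test in $O(|V|^2)$ whether $S \cup B_X$ and $S \cup B_Y$ are both maximal independent sets.

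**The main obstacle.**

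The delicate step is justifying that the canonical witness $S := N_2(B) \cup (\text{greedy far part})$ is \emph{without loss of generality}: I must show that if any witness exists, then this particular one works. The danger is that including \emph{all} of $N_2(B)$ might dominate some vertex $v \in N_1(B)$ that a valid witness was required to leave to $B$ — but since domination of $N_1(B)$ vertices is a constraint to be \emph{satisfied}, over-domination is harmless; the only way inclusion can fail is by violating independence, which Lemma~\ref{indd2} rules out within $N_2(B)$, and adjacency between $N_2(B)$ and the far greedy part is handled by processing the far part after fixing $N_2(B)$. The genuinely hard case is a vertex $u \in N_2(B)$ that is adjacent to some vertex the far-greedy procedure would also want; here I expect to invoke the $C_6$-free (and bipartite) structure once more, exactly as in Lemma~\ref{indd2}, to show no conflicting short cycle can force a bad choice. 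I would close by verifying the two maximality tests run in $O(|V(G)|^2)$, giving the claimed bound.
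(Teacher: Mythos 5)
Your proposal is correct and follows essentially the same route as the paper: both reduce the existence of a witness to the single canonical candidate built from all of $N_2(B)$ (extended greedily outside $N[B]$), justified by Lemma~\ref{indd2}'s guarantee that $N_2(B)$ is independent, so that $B$ is generating if and only if $N_2(B)$ dominates $N(B)\setminus V(B)$. The paper's algorithm merely phrases this as a direct domination check (every vertex of $N(B)\setminus V(B)$ has a neighbor outside $N[B]$) rather than explicitly constructing $S$ and testing maximality, but the underlying argument and the $O(|V(G)|^{2})$ analysis coincide with yours.
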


\begin{proof}
Let $B$ be an induced complete bipartite subgraph of $G$ on vertex sets of
bipartition $B_{X}$ and $B_{Y}$. Since $G$ is bipartite, $N\left(
B_{X}\right)  \cap N\left(  B_{Y}\right)  =\emptyset$. Define
\[
D_{1}=N(B_{X}\cup B_{Y})=(N(B_{X})\cap N_{2}(B_{Y}))\cup(N_{2}(B_{X})\cap
N(B_{Y})),
\]
and
\[
D_{2}=N_{2}(B_{X}\cup B_{Y})=(N_{2}(B_{X})\cap N_{3}(B_{Y}))\cup(N_{3}%
(B_{X})\cap N_{2}(B_{Y})).
\]
Clearly, $B$ is generating if and only if there exists an independent set in
$D_{2}$ which dominates $D_{1}$. However, by Lemma \ref{indd2}, $D_{2}$ is
independent. Hence, $B$ is generating if and only if $D_{2}$ dominates $D_{1}%
$. The following algorithm makes that decision.

\bigskip

\begin{algorithm}[H]
\DontPrintSemicolon
{ $D \longleftarrow \bigcup_{v \in B_{X} \cup B_{Y}} N(v)$  \; }
{ $D1 \longleftarrow D \setminus (B_{X} \cup B_{Y})$  \; }
{ \ForEach{$v \in D1$}
{ \If{$N(v) \setminus D = \emptyset$}
{ \Return $FALSE$\;}
}
}
{ \Return $TRUE$}
\caption{Generating(G,B)}
\label{generating}
\end{algorithm}

\bigskip

\textbf{Complexity of Algorithm \ref{generating}:} The graph is stored as a
boolean matrix. Sets of vertices are stored as boolean arrays of length
$|V(G)|$. Hence, deciding whether an element belongs to a set is done in
$O(1)$ time, while basic operations on sets such as union and difference are
implemented in $O(|V(G)|)$ time. Finding the set of neighbors of a vertex is
completed in $O(|V(G)|)$ time.

In Line 1 and Line 2 there are $O(|V(G)|)$ operations on sets, each takes
$O(|V(G)|)$ time. Therefore, the complexity of Line 1 and Line 2 is
$O(|V(G)|^{2})$. The \textbf{foreach} loop at Line 3 has $O(|V(G)|)$
iterations. Hence, Line 4 is performed $O(|V(G)|)$ times. One run of Line 4
takes $O(|V(G)|)$ time for evaluating the \textbf{if} condition. Therefore,
the total time Algorithm \ref{generating} spends in Line 4 is $O(|V(G)|^{2})$.
Thus the complexity of the whole algorithm is $O(|V(G)|^{2})$ as well.
\end{proof}

Note that also Theorem 8 of \cite{lt:wwc456}, deals with a restricted case in
which a subgraph $B$ is generating if and only if $N_{2}(B)$ dominates
$N(B_{X})\Delta N(B_{Y})$. However, in that restricted case $G$ does not
contain cycles of lengths $5$, $6$ and $7$. Algorithm \ref{generating} does
not fit \cite{lt:wwc456}, since in \cite{lt:wwc456} the set $N_{2}(B)$ is not
necessarily independent.

If $G$ is a graph without cycles of length $6$, and $B$ is a subgraph of $G$
on vertex sets of bipartition $B_{X}$ and $B_{Y}$, then $min(|B_{X}%
|,|B_{Y}|)\leq2$. In the remaining part of Section 4 the following notation is
used. (See Figure \ref{Fig2}.) The set of vertices $X$ is independent, and
$1\leq\left\vert X\right\vert \leq2$. If $\left\vert X\right\vert =1$, then
$X=\{x\}$, otherwise $X=\{x_{1},x_{2}\}$. Moreover, $Y=\{y_{1},...,y_{k}%
\}=\bigcap\limits_{x\in X}N(x)$. For every $1\leq i\leq k$ define
$A_{i}=N(y_{i})\setminus X$ and $Z_{i}=N_{2}(y_{i})\cap N_{3}(X)$.
More notation: $S=N(X)\setminus Y$ and $S^{\prime}=N(S)\setminus X$. Note that
if $\left\vert X\right\vert =1$ then $S=S^{\prime}=\emptyset$.

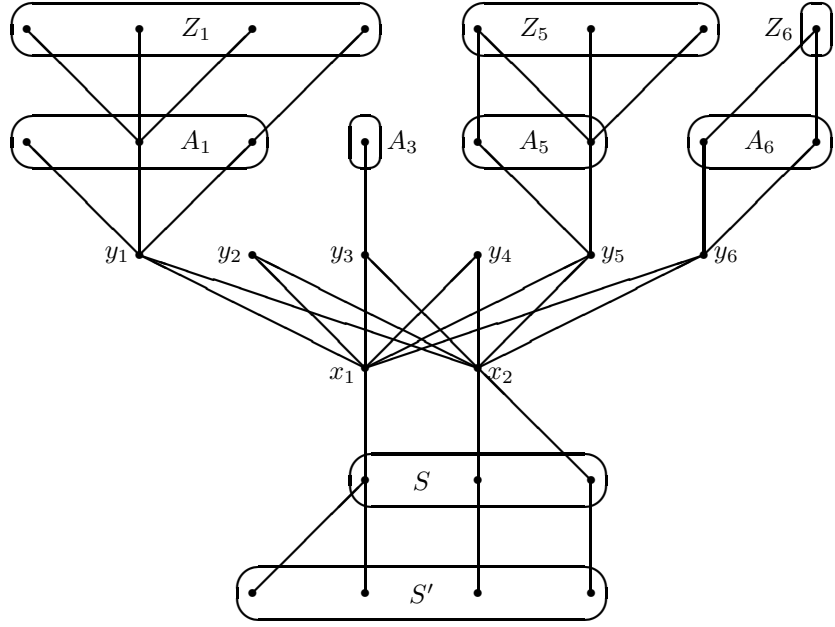
\begin{figure}[h]
\setlength{\unitlength}{1.0cm} \begin{picture}(20,9)\thicklines
\multiput(3,5.5)(1.5,0){6}{\circle*{0.1}}
\multiput(6,4)(1.5,0){2}{\circle*{0.1}}
\put(5.7,3.9){\makebox(0,0){$x_{1}$}}
\put(7.8,3.9){\makebox(0,0){$x_{2}$}}
\put(6,4){\line(-2,1){3}}
\put(6,4){\line(-1,1){1.5}}
\put(6,4){\line(0,1){1.5}}
\put(6,4){\line(1,1){1.5}}
\put(6,4){\line(2,1){3}}
\put(6,4){\line(3,1){4.5}}
\put(7.5,4){\line(-3,1){4.5}}
\put(7.5,4){\line(-2,1){3}}
\put(7.5,4){\line(-1,1){1.5}}
\put(7.5,4){\line(0,1){1.5}}
\put(7.5,4){\line(1,1){1.5}}
\put(7.5,4){\line(2,1){3}}
\put(2.7,5.5){\makebox(0,0){$y_{1}$}}
\put(4.2,5.5){\makebox(0,0){$y_{2}$}}
\put(5.7,5.5){\makebox(0,0){$y_{3}$}}
\put(7.8,5.5){\makebox(0,0){$y_{4}$}}
\put(9.3,5.5){\makebox(0,0){$y_{5}$}}
\put(10.8,5.5){\makebox(0,0){$y_{6}$}}
\multiput(1.5,7)(1.5,0){8}{\circle*{0.1}}
\put(3,5.5){\line(-1,1){1.5}}
\put(3,5.5){\line(0,1){1.5}}
\put(3,5.5){\line(1,1){1.5}}
\put(6,5.5){\line(0,1){1.5}}
\put(9,5.5){\line(-1,1){1.5}}
\put(9,5.5){\line(0,1){1.5}}
\put(10.5,5.5){\line(0,1){1.5}}
\put(10.5,5.5){\line(1,1){1.5}}
\multiput(1.5,8.5)(1.5,0){8}{\circle*{0.1}}
\put(3,7){\line(-1,1){1.5}}
\put(3,7){\line(0,1){1.5}}
\put(3,7){\line(1,1){1.5}}
\put(4.5,7){\line(1,1){1.5}}
\put(7.5,7){\line(0,1){1.5}}
\put(9,7){\line(-1,1){1.5}}
\put(9,7){\line(0,1){1.5}}
\put(9,7){\line(1,1){1.5}}
\put(10.5,7){\line(1,1){1.5}}
\put(12,7){\line(0,1){1.5}}
\put(3,7){\oval(3.4,0.7)}
\put(6,7){\oval(0.4,0.7)}
\put(8.25,7){\oval(1.9,0.7)}
\put(11.25,7){\oval(1.9,0.7)}
\put(3.75,8.5){\oval(4.9,0.7)}
\put(9,8.5){\oval(3.4,0.7)}
\put(12,8.5){\oval(0.4,0.7)}
\put(3.75,7){\makebox(0,0){$A_{1}$}}
\put(6.5,7){\makebox(0,0){$A_{3}$}}
\put(8.25,7){\makebox(0,0){$A_{5}$}}
\put(11.25,7){\makebox(0,0){$A_{6}$}}
\put(3.75,8.5){\makebox(0,0){$Z_{1}$}}
\put(8.25,8.5){\makebox(0,0){$Z_{5}$}}
\put(11.5,8.5){\makebox(0,0){$Z_{6}$}}
\multiput(6,2.5)(1.5,0){3}{\circle*{0.1}}
\put(6,2.5){\line(0,1){1.5}}
\put(7.5,2.5){\line(0,1){1.5}}
\put(9,2.5){\line(-1,1){1.5}}
\multiput(4.5,1)(1.5,0){4}{\circle*{0.1}}
\put(4.5,1){\line(1,1){1.5}}
\put(6,1){\line(0,1){1.5}}
\put(7.5,1){\line(0,1){1.5}}
\put(9,1){\line(0,1){1.5}}
\put(7.5,2.5){\oval(3.4,0.7)}
\put(6.75,1){\oval(4.9,0.7)}
\put(6.75,2.5){\makebox(0,0){$S$}}
\put(6.75,1){\makebox(0,0){$S'$}}
\end{picture}
\caption{Notation used in this Section. The sets $A_{2}$, $A_{4}$, $Z_{2}$,
$Z_{3}$, and $Z_{4}$ are empty. Since $S^{\prime}$ dominates $S$, and $Z_{i}$
dominates $A_{i}$ for each $i \in\{2,4,5,6\}$, the subgraph $G[\{x_{1}%
,x_{2},y_{2},y_{4},y_{5},y_{6}\}]$ is generating.}%
\label{Fig2}%
\end{figure}

\begin{lemma}
\label{gssbc6} Let $G$ be a bipartite graph without cycles of length $6$. Let
$X$ be an independent set of vertices such that $1\leq\left\vert X\right\vert
\leq2$, and let $Y=\{y_{1},...,y_{k}\}=\bigcap\limits_{x\in X}N(x)$. Assume
that $\left\vert X\right\vert \leq\left\vert Y\right\vert $ and $B=G[X\cup Y]$
is generating. Let $Y^{\prime}$ be a nonempty subset of $Y$, such that
$y_{i}\in Y\setminus Y^{\prime}$ implies $A_{i}\neq\emptyset$, for every
$1\leq i\leq k$. Then $B^{\prime}=G[X\cup Y^{\prime}]$ is generating.
\end{lemma}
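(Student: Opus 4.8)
Rather than building a witness by hand, I would argue through the domination characterisation of Theorem \ref{gsbc6p}: $B'=G[X\cup Y']$ is generating if and only if $N_2(B')$ dominates $N_1(B')$, and the hypothesis ``$B=G[X\cup Y]$ is generating'' says precisely that $N_2(B)$ dominates $N_1(B)$. Writing $Y''=Y\setminus Y'$, the task becomes to transport the domination we already possess for $B$ onto $B'$, and then to account separately for the vertices whose distance to the subgraph changes when $Y$ shrinks to $Y'$.

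First I would record the elementary neighbourhood bookkeeping. Since $X\cup Y'\subseteq X\cup Y$ we have $\mathrm{dist}(w,B')\ge\mathrm{dist}(w,B)$ for every $w$, and a direct computation gives $N_1(B')=Y''\cup S\cup\bigcup_{y_i\in Y'}A_i$, so that $N_1(B')\cap N_1(B)=S\cup\bigcup_{y_i\in Y'}A_i$ while $N_1(B')\setminus N_1(B)=Y''$. The intersection part needs only a distance observation: if $v\in N_1(B')\cap N_1(B)$ then, $B$ being generating, some $w\in N_2(B)$ is adjacent to $v$; from $\mathrm{dist}(w,B')\le\mathrm{dist}(w,v)+\mathrm{dist}(v,B')=2$ together with $\mathrm{dist}(w,B')\ge\mathrm{dist}(w,B)=2$ we conclude $w\in N_2(B')$. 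Thus every vertex of $S$ and every $A_i$ with $y_i\in Y'$ is already dominated by $N_2(B')$, with no further effort.

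The real content is the set $Y''$. Each removed $y_i$ is adjacent to $X\subseteq B'$, hence lies in $N_1(B')$, and its only other neighbours are the vertices of $A_i$; so I must exhibit some $a\in A_i$ at distance $2$ from $B'$, i.e.\ with $a\notin N(Y')$. This is exactly where $A_i\neq\emptyset$ and the exclusion of $6$-cycles have to cooperate. The favourable configuration is when two distinct $a,a'\in A_i$ reach two distinct kept vertices $y_j,y_{j'}\in Y'$: choosing any $x\in X$, the sequence $y_i,a,y_j,x,y_{j'},a',y_i$ is then a $6$-cycle, contradicting the girth hypothesis. Running this over all pairs shows that, in the remaining situation where every vertex of $A_i$ meets $Y'$, all of $A_i$ must attach to a single kept vertex $y_j$, so that $A_i\subseteq A_j$.

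The step I expect to be the genuine obstacle is exactly this degenerate configuration $A_i\subseteq A_j$: if it occurs, then every neighbour of $y_i$ lies within distance $1$ of $B'$, so $y_i$ has no neighbour in $N_2(B')$ and $B'$ fails to be generating. Hence the whole lemma ultimately rests on excluding $A_i\subseteq A_j$ (for $y_i\in Y''$, $y_j\in Y'$) from the standing hypotheses, and this exclusion is the substantive point on which I would concentrate. The natural tool is the witness of $B$ together with the sets $Z_i=N_2(y_i)\cap N_3(X)$, which by Lemma \ref{indd2} and the analysis behind Theorem \ref{gsbc6p} lie in $V(G)\setminus N[X\cup Y]$ and dominate $A_i$; I would feed a vertex $z\in Z_i$ adjacent to $a$ into a second short-cycle argument along $y_i,a,z$ to rule the configuration out. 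Once this is settled, $N_2(B')$ dominates $Y''$ as well, and with the intersection part already dispatched Theorem \ref{gsbc6p} delivers that $B'$ is generating.
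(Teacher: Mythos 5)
Your reduction to the domination criterion, your split of $N(V(B'))$ into the inherited part $S\cup\bigl(\bigcup_{y_i\in Y'}A_i\bigr)$ and the new part $Y''=Y\setminus Y'$, and your distance argument showing that the inherited part remains dominated by $N_2(V(B'))$ are all correct and run parallel to the paper's proof. You have also put your finger on the genuine crux: for each removed $y_i$ one must exhibit a vertex of $A_i$ at distance exactly $2$ from $X\cup Y'$, and your $6$-cycle argument correctly reduces the bad case to the configuration in which every vertex of $A_i$ is adjacent to one fixed kept vertex $y_j$, i.e.\ $A_i\subseteq A_j$. The gap is your final step: the proposed exclusion of this configuration via $Z_i$ and a second short-cycle argument cannot be carried out, because the configuration is realizable. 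Take $V(G)=\{x,y_1,y_2,a,z\}$ with edges $xy_1$, $xy_2$, $y_1a$, $y_2a$, $az$. This graph is bipartite and has no $C_6$; with $X=\{x\}$ we get $Y=\{y_1,y_2\}$, the subgraph $B=G[\{x,y_1,y_2\}]$ is generating with witness $\{z\}$, and $A_2=\{a\}\neq\emptyset$. Yet for $Y'=\{y_1\}$ the subgraph $B'=G[\{x,y_1\}]$ is not generating: any witness must lie outside $N[\{x,y_1\}]=\{x,y_1,y_2,a\}$, hence be contained in $\{z\}$, and $\{z,y_1\}$ fails to dominate $y_2$, both of whose neighbours ($x$ and $a$) lie in $N[V(B')]$. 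Here $z\in Z_2$ is adjacent to $a$ and no forbidden cycle appears, so no short-cycle argument can rescue the step.

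In other words, the obstacle you flagged is not merely the hard part of the proof --- it is a counterexample to the lemma as stated, and it is worth recording that the paper's own proof fails at exactly the point you identified. The paper asserts $N_{2}(V(B'))=S'\cup\bigl(\bigcup_{i\in\overline{I}}A_{i}\bigr)\cup\bigl(\bigcup_{i\in I}Z_{i}\bigr)$ and concludes that $A_i$ dominates $y_i$ for every removed index $i$; but when $A_i\subseteq N(Y')$ the set $A_i$ lies in $N(V(B'))$ rather than in $N_{2}(V(B'))$, and $y_i$ then has no neighbour at distance $2$ from $V(B')$. A correct version of the lemma needs the stronger hypothesis that each removed $y_i$ has a neighbour outside $X\cup N(Y')$, equivalently $A_i\not\subseteq\bigcup_{y_j\in Y'}A_j$; under that hypothesis your argument closes immediately. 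As written, neither your plan nor the paper's proof establishes the statement, and the error propagates to the subsequent results that invoke this lemma.
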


\begin{proof}
By Lemma \ref{indd2}, $N_{2}(V(B))$ and $N_{2}(V(B^{\prime}))$ are independent
sets. It remains to prove that $N(V(B^{\prime}))$ is dominated by
$N_{2}(V(B^{\prime}))$.

Clearly, $N(V(B))=S\cup(\bigcup\limits_{1\leq i\leq k}A_{i})$ and
$N_{2}(V(B))=S^{\prime}\cup(\bigcup\limits_{1\leq i\leq k}Z_{i})$. Since $B$
is generating, $N_{2}(V(B))$ dominates $N(V(B))$. Therefore, $S^{\prime}$
dominates $S$, and $Z_{i}$ dominates $A_{i}$ for every $1\leq i\leq k$.

Let $I=\{1\leq i\leq k:y_{i}\in Y^{\prime}\}$ and $\overline{I}=\{1\leq i\leq
k:y_{i}\not \in Y^{\prime}\}$. Then $N(V(B^{\prime}))=S\cup(Y\setminus
Y^{\prime})\cup(\bigcup_{i\in I}A_{i})$ and $N_{2}(V(B^{\prime}))=S^{\prime
}\cup(\bigcup\limits_{i\in\overline{I}}A_{i})\cup(\bigcup\limits_{i\in I}%
Z_{i})$. For every $i\in\overline{I}$ it holds that $A_{i}\neq\emptyset$, and
therefore $A_{i}\subseteq N_{2}(V(B^{\prime}))$ dominates $y_{i}\in
N(V(B^{\prime}))$. For every $i\in I$ it holds that $Z_{i}\subseteq
N_{2}(V(B^{\prime}))$ dominates $A_{i}\subseteq N(V(B^{\prime}))$. Hence,
$N(V(B^{\prime}))$ is dominated by $N_{2}(V(B^{\prime}))$.
\end{proof}

\begin{lemma}
\label{maxgsbc6} The following problem can be solved in $O(|V(G)|^{3})$
time.\newline Input: A bipartite graph $G$ without cycles of length $6$, and a
vertex $x\in V(G)$. \newline Output: A maximal set $T\subseteq N(x)$ such that
$G[\{x\}\cup T]$ is generating, or $\emptyset$ if such a set does not exist.
\end{lemma}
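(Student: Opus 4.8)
The plan is to reduce the search for a maximal generating $G[\{x\}\cup T]$ to a purely local test on each candidate neighbor $y_i\in N(x)$, and then to show that the union of all neighbors passing this test is itself the unique maximum generating set. Throughout I take $X=\{x\}$, so that $Y=N(x)=\{y_1,\dots,y_k\}$, and for each $i$ I use $A_i=N(y_i)\setminus\{x\}$ and $Z_i=N_2(y_i)\cap N_3(x)$ as in the notation above; note $S=S'=\emptyset$ since $|X|=1$. Define $W=\{i:Z_i\text{ dominates }A_i\}$ and $T_W=\{y_i:i\in W\}$. The claim I will prove is that $T_W$ is a maximal (indeed maximum) set with $G[\{x\}\cup T_W]$ generating, and that $T_W$ is computable in $O(|V(G)|^3)$ time; the algorithm returns $T_W$ when $W\neq\emptyset$ and $\emptyset$ otherwise.

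First I would establish the local characterization. By Theorem \ref{gsbc6p}, for any $T=\{y_i:i\in I\}$ the subgraph $G[\{x\}\cup T]$ is generating iff $N_2(V(B))$ dominates $N(V(B))$, and as in the proof of Lemma \ref{gssbc6} one has $N(V(B))=(Y\setminus T)\cup\bigcup_{i\in I}A_i$ and $N_2(V(B))=(\bigcup_{i\notin I}A_i)\cup(\bigcup_{i\in I}Z_i)$. The crucial structural fact, which I would isolate as a sub-claim, is that if a vertex $z$ adjacent to some $a\in A_i$ lies in $Z_{i'}$ with $i,i'\in I$, then in fact $z\in Z_i$; otherwise the vertices $x,y_i,a,z,a',y_{i'}$ (with $a'\in A_{i'}$ a neighbor of $z$) form a cycle of length $6$, which is forbidden. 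Using this, the domination of $\bigcup_{i\in I}A_i$ by $N_2(V(B))$ is equivalent to the per-index condition ``$Z_i$ dominates $A_i$ for every $i\in I$'', i.e. to $I\subseteq W$.

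Next I would handle the remaining part of the domination condition, that each removed $y_j$ ($j\notin I$) is dominated, specializing to $I=W$. Since $y_j$'s only neighbors are $x$ and $A_j$, it can be dominated only by some $a\in A_j$ lying in $N_2(V(B))$, which happens exactly when $a$ is \emph{private}, i.e. $a\notin\bigcup_{i\in W}A_i$. Here the no-$6$-cycle hypothesis pays off again: if $j\notin W$ then some witness $a^\ast\in A_j$ is not dominated by $Z_j$, and the same cycle argument shows $a^\ast\notin A_i$ for every $i\in W$ (otherwise a neighbor $z\in Z_i$ of $a^\ast$ would be forced into $Z_j$ and would dominate $a^\ast$); hence $a^\ast$ is private and dominates $y_j$. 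This proves $G[\{x\}\cup T_W]$ is generating. Maximality is then immediate from the local characterization: any generating $T$ satisfies $T\subseteq T_W$, since each $y_i\in T$ forces $i\in W$; thus $T_W$ is the unique maximum, and in particular maximal. When $W=\emptyset$ no nonempty $T$ can be generating, so $\emptyset$ is the correct output.

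Finally, for the running time I would compute the distance layers $N_2(x),N_3(x)$ once by breadth-first search in $O(|V(G)|^2)$, and then, for each of the at most $|V(G)|$ vertices $y_i\in N(x)$, form $A_i$, form $Z_i=\{z\in N_3(x):N(z)\cap A_i\neq\emptyset\}$, and test whether every $a\in A_i$ has a neighbor in $Z_i$. With the graph stored as an adjacency matrix each of these steps costs $O(|V(G)|^2)$ per vertex $y_i$, giving $O(|V(G)|^3)$ overall. The main obstacle, and the step I would treat most carefully, is the equivalence between the global domination condition and the local index-by-index test ``$Z_i$ dominates $A_i$'': both directions rest on the $6$-cycle exclusion, and one must handle with care the configurations created by the (permitted) $4$-cycles, where a single vertex of $A_i$ may be adjacent to several of the $y$'s, so as to guarantee that cross-domination between a $Z_{i'}$ and a different $A_i$ cannot occur except when it collapses back into $Z_i$.
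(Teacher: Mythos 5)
Your proposal is correct and follows essentially the same route as the paper: the candidate set is exactly $T=\{y_i : A_i\subseteq N(Z_i)\}$, its generating property is verified via the domination criterion underlying Theorem \ref{gsbc6p} together with Lemma \ref{indd2}, maximality comes from the same per-index witness $a_i\in A_i\setminus N(Z_i)$, and the $O(|V(G)|^{3})$ bookkeeping matches. Your six-cycle sub-claim is in fact automatic---any $z\in N_3(x)$ adjacent to some $a\in A_i$ already lies in $N_2(y_i)\cap N_3(x)=Z_i$ by bipartite distance parity---while your ``private vertex'' observation usefully makes explicit a point that the paper's description of $N_2(\{x\}\cup T)$ leaves implicit.
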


\begin{proof}
Let $N(x)=Y=\{y_{1},...,y_{k}\}$. Let $I=\{1\leq i\leq k:A_{i}\subseteq
N(Z_{i})\}$ and $\overline{I}=\{1\leq i\leq k:A_{i}\setminus N(Z_{i}%
)\neq\emptyset\}$. Let $T=\{y_{i}:i\in I\}$. Note that if $i\in\overline{I}$,
then $A_{i}\neq\emptyset$. Each of the sets $\bigcup\limits_{i=1}^{k}A_{i}$
and $\bigcup\limits_{i=1}^{k}Z_{i}$ is independent, since all its vertices
belong to the same vertex set of bipartition of $G$.

Define $D_{1}=N(\{x\}\cup T)=(Y\setminus T)\cup(\bigcup\limits_{i\in I}A_{i})$
and $D_{2}=N_{2}(\{x\}\cup T)=(\bigcup\limits_{i\in I}Z_{i})\cup
(\bigcup\limits_{i\in\overline{I}}A_{i})$. By Lemma \ref{indd2}, $D_{2}$ is independent.

Assume $T\neq\emptyset$. We prove that $G[\{x\}\cup T]$ is generating. By
Lemma \ref{indd2}, it is enough to prove that $D_{1}\subseteq N(D_{2})$. Let
$1\leq i\leq k$. If $i\in\overline{I}$, then, by definition of $I$, $A_{i}%
\neq\emptyset$. Hence, $A_{i}\subseteq D_{2}$ dominates $y_{i}$. However, if
$i\in I$, then $A_{i}\subseteq D_{1}$. Therefore, by definition of $I$,
$A_{i}$ is dominated by $Z_{i}\subseteq D_{2}$.

On the other hand, let $B^{\prime}$ be an induced complete bipartite subgraph
of $G$ on vertex sets of bipartition $\{x\}$ and $T^{\prime}$, such that there
exists $y_{i} \in T^{\prime}\setminus T$. There exits $a_{i} \in A_{i}$ which
is not dominated by $Z_{i}$. Therefore, $N_{2}(\{x\} \cup T^{\prime})$ does
not dominate $N(\{x\} \cup T^{\prime})$, and $B^{\prime}$ is not generating.
Hence, $T$ is a maximal set such that $G[\{x\}\cup T]$ is generating.

Let us conclude the proof by presenting the algorithm.

\bigskip

\begin{algorithm}[H]
\DontPrintSemicolon
{ $T \longleftarrow \emptyset$  \; }
{ \ForEach{$y \in N(x)$}
{ { $A \longleftarrow N(y) \setminus \{x\}$ \; }
{ $flag \longleftarrow TRUE$ \;}
{ \ForEach{$a \in A$ }
{
{ \If{$N(a) \setminus N(x) = \emptyset$}
{
{ $flag \longleftarrow FALSE$\;}
{\bf break }
}
}
}
}
{ \If{$flag$}
{ $T \longleftarrow T \cup \{y\}$ }
}
}
}
{ \Return $T$}
\caption{$MaxGen1(G,x)$\label{maxgen}}
\end{algorithm}

\bigskip

\textbf{Complexity of Algorithm \ref{maxgen}:} Each of the \textbf{foreach}
loops in Lines 2 and 5 has $O(|V(G)|)$ iterations. Therefore, Lines 6-8 are
performed $O(|V(G)|^{2})$ times. The condition in Line 6 involves operations
on sets. Hence, it takes $O(|V(G)|)$ time to evaluate it once. The total time
Algorithm \ref{maxgen} spends in Line 6 is $O(|V(G)|^{3})$. This is also the
complexity of the algorithm.
\end{proof}

\begin{lemma}
\label{w0gsbc6} Let $G$ be a bipartite graph without cycles of length 6, and
let $B$ be a generating subgraph of $G$ on vertex sets of bipartition $X$ and
$Y = \{ y_{1},...,y_{k}\}$, such that $k \geq2$. Let $w \in WCW(G)$. Let $1
\leq j \leq k$ such that $A_{j} \neq\emptyset$. Then $w(y_{j})=0$.
\end{lemma}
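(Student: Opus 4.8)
The plan is to exploit Lemma \ref{gssbc6}, which permits shrinking the $Y$-side of a generating subgraph by discarding any vertices $y_i$ with $A_i \neq \emptyset$, in combination with the defining property of generating subgraphs recorded in the introduction: if $B'$ is a generating subgraph of $G$ with bipartition parts $P$ and $Q$, then $w(P) = w(Q)$ for every $w \in WCW(G)$. The whole argument reduces to comparing this equality for $B$ and for a slightly smaller generating subgraph obtained by deleting $y_j$.

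First I would set $Y' = Y \setminus \{y_j\}$. Since $k \geq 2$, the set $Y'$ is nonempty. The only element of $Y \setminus Y'$ is $y_j$, and by hypothesis $A_j \neq \emptyset$, so the condition ``$y_i \in Y \setminus Y'$ implies $A_i \neq \emptyset$'' of Lemma \ref{gssbc6} holds vacuously except for $y_j$, where it also holds. Moreover, using the standing convention $1 \leq |X| \leq 2$ together with $k \geq 2$, we get $|X| \leq 2 \leq k = |Y|$, so the hypothesis $|X| \leq |Y|$ is satisfied as well. Hence Lemma \ref{gssbc6} applies and $B' = G[X \cup Y']$ is a generating subgraph of $G$ with bipartition parts $X$ and $Y'$.

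Next I would apply the weight equality to both $B$ and $B'$. Since $B$ is generating with parts $X$ and $Y$, we have $w(X) = w(Y)$; since $B'$ is generating with parts $X$ and $Y'$, we have $w(X) = w(Y')$. Combining these with $w(Y) = w(Y') + w(y_j)$, which holds because $Y$ is the disjoint union of $Y'$ and $\{y_j\}$, yields $w(y_j) = w(Y) - w(Y') = 0$, as required.

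The argument is short, and no step presents a genuine obstacle; the only point demanding care is checking that all hypotheses of Lemma \ref{gssbc6} are met, namely that $Y'$ is nonempty, that the single discarded vertex $y_j$ satisfies $A_j \neq \emptyset$, and that $|X| \leq |Y|$. Each of these follows at once from $k \geq 2$, the assumption $A_j \neq \emptyset$, and $|X| \leq 2$, so I expect the verification to be routine.
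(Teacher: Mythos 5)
Your proposal is correct and follows essentially the same route as the paper's proof: delete $y_j$, invoke Lemma \ref{gssbc6} to see that $G[X\cup(Y\setminus\{y_j\})]$ is still generating, and subtract the two weight equalities $w(X)=w(Y)$ and $w(X)=w(Y\setminus\{y_j\})$. Your explicit verification of the hypotheses of Lemma \ref{gssbc6} is more detailed than the paper's, but the argument is identical.
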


\begin{proof}
The fact that $B$ is generating implies that $w(X) = w(Y)$.

Let $Y^{\prime}= Y \setminus\{y_{j}\}$, and let $B^{\prime}=G[X \cup
Y^{\prime}]$. By Lemma \ref{gssbc6}, $B^{\prime}$ is generating. Therefore,
$w(X) = w(Y^{\prime})$. Consequently, $w(Y-Y^{\prime}) = w(y_{j})=0$.
\end{proof}

\begin{lemma}
\label{max2gsbc6} The following problem can be solved in $O(|V(G)|^{2})$
time.\newline Input: A bipartite graph $G$ without cycles of length $6$, and
two vertices $x_{1},x_{2}\in V(G)$. \newline Output: A maximal set $T\subseteq
N(x_{1})\cap N(x_{2})$ such that $\left\vert T\right\vert \geq2$ and
$G[\{x_{1},x_{2}\}\cup T]$ is generating, or $\emptyset$ if such a set does
not exist.
\end{lemma}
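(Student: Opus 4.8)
The plan is to reduce the task, exactly as in the $|X|=1$ case of Lemma \ref{maxgsbc6}, to a single global test plus $k$ independent per-vertex membership tests, and then to use the absence of $6$-cycles to sharpen the running time from $O(|V(G)|^{3})$ to $O(|V(G)|^{2})$. Set $X=\{x_{1},x_{2}\}$ and $Y=N(x_{1})\cap N(x_{2})=\{y_{1},\dots,y_{k}\}$. If $k\le 1$ no $T$ with $|T|\ge 2$ can exist, so the algorithm returns $\emptyset$; hence assume $k\ge 2$. Since $x_{1},x_{2}$ share a neighbour, in a bipartite graph they lie on the same side, so $X$ is independent and $G[X\cup T]=K_{2,|T|}$ for every $T\subseteq Y$. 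By Lemma \ref{indd2} the set $N_{2}(X\cup T)$ is independent, so by the criterion used in the proof of Theorem \ref{gsbc6p}, $G[X\cup T]$ is generating if and only if $N_{2}(X\cup T)$ dominates $N(X\cup T)$.

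The central step is to decompose this one requirement into a $T$-independent global condition and local conditions at each $y_{i}$, which is where the forbidden $6$-cycles do the work. Writing $I=\{i:y_{i}\in T\}$, one has $N(X\cup T)=S\cup(Y\setminus T)\cup\bigcup_{i\in I}A_{i}$. First I would prove $S^{\prime}\cap A_{i}=\emptyset$: a vertex $s^{\prime}\in S^{\prime}\cap A_{i}$ is adjacent to some $s\in S$ (say $s\in N(x_{1})$) and to $y_{i}$, and together with any $y_{\ell}\in Y\setminus\{y_{i}\}$ it yields the $6$-cycle $x_{1}\,s\,s^{\prime}\,y_{i}\,x_{2}\,y_{\ell}$, which is forbidden. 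Consequently the only possible dominators of $S$ lie in $S^{\prime}\subseteq N_{2}(X\cup T)$, so $S$ is dominated if and only if every $s\in S$ has a neighbour outside $X$; this is condition (a), and it is independent of $T$. Next, for $y_{i}\notin T$ the vertex $y_{i}$ is dominated if and only if $A_{i}\ne\emptyset$ (condition (b)); and for $y_{i}\in T$ one checks that the neighbours of any $a\in A_{i}$ lying in $N_{2}(X\cup T)$ are exactly $N(a)\cap Z_{i}$, because a neighbour of $a$ outside $N[X]$ must lie in $N_{3}(X)\cap N_{2}(y_{i})=Z_{i}$. Thus $A_{i}$ is dominated if and only if $A_{i}\subseteq N(Z_{i})$, equivalently every $a\in A_{i}$ has a neighbour outside $N(x_{1})\cup N(x_{2})$ (condition (c)).

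Now set $T^{*}=\{y_{i}:A_{i}\subseteq N(Z_{i})\}$. Condition (c) forces every generating $T$ to satisfy $T\subseteq T^{*}$, so $T^{*}$ is the unique maximum candidate. Moreover every $y_{i}\notin T^{*}$ has $A_{i}\ne\emptyset$ (failing (c) already requires $A_{i}\ne\emptyset$), so (b) holds automatically, and when (a) holds $T^{*}$ itself is generating. Hence the algorithm is: compute $Y$, $S$, $S^{\prime}$ and test (a); if (a) fails or $|T^{*}|<2$, return $\emptyset$; otherwise return $T^{*}$. Correctness is immediate from the decomposition, since any admissible $T$ is contained in $T^{*}$ and $T^{*}$ is itself admissible whenever it has size at least $2$.

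For the running time the key observation is that when $k\ge 3$ the sets $A_{1},\dots,A_{k}$ are pairwise disjoint: if $a\in A_{i}\cap A_{j}$ with $i\ne j$, then a third $y_{\ell}\in Y$ produces the $6$-cycle $x_{1}\,y_{i}\,a\,y_{j}\,x_{2}\,y_{\ell}$, a contradiction. Therefore $\sum_{i}|A_{i}|\le|V(G)|$, so the total number of pairs $(y_{i},a)$ with $a\in A_{i}$ examined while testing (c) is $O(|V(G)|)$, and since each test $N(a)\setminus(N(x_{1})\cup N(x_{2}))\ne\emptyset$ costs $O(|V(G)|)$, this phase costs $O(|V(G)|^{2})$; forming $S,S^{\prime}$ and testing (a) also cost $O(|V(G)|^{2})$, and the case $k\le 2$ is trivial. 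It is exactly this disjointness, unavailable when $|X|=1$, that improves on Lemma \ref{maxgsbc6}. The step I expect to be the main obstacle is precisely this distance-and-overlap bookkeeping: verifying, purely from the absence of $6$-cycles, that $S$ is dominated only through $S^{\prime}$ and that each $a\in A_{i}$ can be dominated only from within $Z_{i}$, so that no cross-domination between the $S$-part and the $Y$-part, or among distinct $A_{i}$, can spoil the clean decomposition.
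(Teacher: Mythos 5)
Your proof follows essentially the same route as the paper: it reduces generation to ``$N_{2}(X\cup T)$ dominates $N(X\cup T)$'' via Lemma \ref{indd2}, isolates the same $T$-independent test that $S'$ dominates $S$, takes the same maximum candidate $T^{*}=\{y_{i}:A_{i}\subseteq N(Z_{i})\}$ with the same maximality argument, and obtains the $O(|V(G)|^{2})$ bound from the same observation that the absence of $6$-cycles forces each vertex of $\bigcup_{i}A_{i}$ to lie in only boundedly many $A_{i}$. The argument is correct; the only cosmetic difference is that you phrase the separation of the $S$-part from the $Y$-part as $S'\cap A_{i}=\emptyset$ rather than as independence of $S\cup\bigcup_{i}A_{i}$.
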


\begin{proof}
Let $X=\{x_{1},x_{2}\}$ and $Y=N(x_{1})\cap N(x_{2})=\{y_{1},...,y_{k}\}$.
Suppose $k\geq2$. Let $S=N(X)\setminus Y$, and $S^{\prime}=N(S)\setminus X$.
(See Fig. \ref{Fig2}.) If $S$ is not dominated by $S^{\prime}$, then such a
set $T$ does not exist, and the algorithm outputs $\emptyset$. Hence, assume
$S\subseteq N(S^{\prime})$.

Assume on the contrary that $(\bigcup\limits_{1\leq i\leq k}A_{i})\cup S$ is
not independent. There exist two adjacent vertices, $a\in\bigcup\limits_{1\leq
i\leq k}A_{i}$ and $s\in S$. Assume without loss of generality that $s\in
N(x_{1})$. There exists $y\in Y\cap N(a)$, and $y^{\prime}\in Y\setminus
\{y\}$. Hence, $(a,y,x_{2},y^{\prime},x_{1},s)$ is a cycle of length $6$,
which is a contradiction. Therefore, $(\bigcup\limits_{1\leq i\leq k}%
A_{i})\cup S$ is independent. By Lemma \ref{indd2}, the set $S^{\prime}%
\cup(\bigcup\limits_{i=1}^{k}Z_{i})$ is independent as well.

Let $T=\{y_{i}\in Y:A_{i}\subseteq N(Z_{i})\}$. Note that if $y_{i}\not \in
T$, then $A_{i}\neq\emptyset$. Define%

\[
D_{1} = N( X \cup T ) = \{y_{i} : y_{i} \not \in T\} \cup\{A_{i} : y_{i} \in
T\} \cup S
\]
and%

\[
D_{2} = N_{2}( X \cup T ) = \{Z_{i} : y_{i} \in T\} \cup\{A_{i} : y_{i}
\not \in T\} \cup S^{\prime}%
\]
By Lemma \ref{indd2}, $D_{2}$ is independent.

Suppose $\left\vert T\right\vert \geq2$. We prove that $G[X\cup T]$ is
generating. By Lemma \ref{indd2}, it is enough to prove that $D_{1}\subseteq
N(D_{2})$. We assume that $S\subseteq D_{1}$ is dominated by $S^{\prime
}\subseteq N(D_{2})$. Let $1\leq i\leq k$. If $y_{i}\in D_{1}$, then, by
definition of $T$, $A_{i}\neq\emptyset$. Hence, $A_{i}\subseteq D_{2}$
dominates $y_{i}$. If $y_{i}\not \in D_{1}$, then $A_{i}\subseteq D_{1}$.
Therefore, by definition of $T$, $A_{i}$ is dominated by $Z_{i}\subseteq
D_{2}$.

On the other hand, let $B^{\prime}$ be an induced complete bipartite subgraph
of $G$ on vertex sets of bipartition $X$ and $T^{\prime}$, such that there
exists $y_{i} \in T^{\prime}\setminus T$. There exits $a_{i} \in A_{i}$ which
is not dominated by $Z_{i}$. Therefore, $N_{2}(X \cup T^{\prime})$ does not
dominate $N(X \cup T^{\prime})$, and $B^{\prime}$ is not generating.

Let us conclude the proof by presenting the algorithm.

\bigskip

\begin{algorithm}[H]
\DontPrintSemicolon
{ $T \longleftarrow \emptyset$  \; }
{ $X \longleftarrow \{ x_{1}, x_{2} \}$  \; }
{ $Y \longleftarrow N(x_{1}) \cap N(x_{2})$  \; }
{ \If{$|Y| < 2$}
{ \Return $\emptyset$}
}
{ \ForEach{$s \in N(x_{1}) \Delta N(x_{2})$}
{
{ \If{$N(s) \setminus X = \emptyset$}
{ \Return $\emptyset$}
}
}
}
{ \ForEach{$y \in Y$}
{
{ $flag \longleftarrow TRUE$ \;}
{\ForEach{$a \in N(y) \setminus X$}
{ \If{$N(a) \setminus Y = \emptyset$}
{
{ $flag \longleftarrow FALSE$ \;}
{\bf break }
}
}
{ \If{$flag = TRUE$}
{ $T \longleftarrow T \cup \{y\}$ }
}
}
}
}
{ \If{$|T| < 2$}
{ \Return $\emptyset$}
}
{ \Return $T$}
\caption{$MaxGen2(G,x_{1},x_{2})$}
\label{maxgen2}
\end{algorithm}

\bigskip

\textbf{Complexity of Algorithm \ref{maxgen2}:} The complexity of Lines 1-5 is
$O(|V(G)|)$. The \textbf{foreach} loop in Line 6 has $O(|V(G)|)$ iterations.
In each iteration the condition of Line 7 is evaluated in $O(|V(G)|)$ time.
Therefore, the complexity of Lines 6-8 is $O(|V(G)|^{2})$.

If a vertex $a\in\bigcup\limits_{1\leq i\leq k}A_{i}$ was adjacent to 3
distinct vertices, $y$, $y^{\prime}$ and $y^{\prime\prime}$ of $Y$, then $G$
contained a cycle of length 6, $(y,a,y^{\prime},x_{1},y^{\prime\prime},x_{2}%
)$. Therefore, every $a\in\bigcup\limits_{1\leq i\leq k}A_{i}$ is adjacent to
two vertices of $Y$ at most. Hence, the total number of iterations of the
\textbf{foreach} loop of Line 11 is $O(|V(G)|)$. Each evaluation of the
condition of Line 12 takes $O(|V(G)|)$ time, and the algorithm spends
$O(|V(G)|^{2})$ time in Line 12. Hence, the complexity of Lines 9-16 is
$O(|V(G)|^{2})$. The complexity of Lines 17-19 is $O(|V(G)|)$. The total
complexity of Algorithm \ref{maxgen2} is $O(|V(G)|^{2})$.
\end{proof}

\begin{theorem}
\label{wcwbc6} The following problem can be solved in $O(|V(G)|^{4})$
time.\newline Input: A bipartite graph $G$ without cycles of length $6$.
\newline Output: $WCW(G)$.
\end{theorem}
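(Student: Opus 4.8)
The plan is to realize $WCW(G)$ as the solution space of an explicitly computed, polynomially sized system of linear equations in the variables $\{w(v):v\in V(G)\}$, and then to read off a basis by Gaussian elimination. The starting point is the characterization underlying the whole theory: $w\in WCW(G)$ if and only if $w(B_{X})=w(B_{Y})$ for every generating subgraph $B$ with bipartition $(B_{X},B_{Y})$. Since $G$ is bipartite without $C_{6}$, every generating subgraph satisfies $\min(|B_{X}|,|B_{Y}|)\le 2$, so, up to swapping the two sides, it is of one of three types: a relating edge $K_{1,1}$, a star $K_{1,k}$ with $k\ge 2$, or a double star $K_{2,k}$ with $k\ge 2$. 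The idea is that, instead of enumerating the possibly exponentially many generating subgraphs, it suffices to record, for each center and each pair of vertices, only the unique maximal generating subgraph together with a few forced zero-weight equations.

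Concretely, I would build the system in three passes. First, for every vertex $x$ run \textit{MaxGen1}$(G,x)$ (Lemma \ref{maxgsbc6}) to obtain the maximal $T_{x}\subseteq N(x)$ with $G[\{x\}\cup T_{x}]$ generating; whenever $|T_{x}|\ge 2$, record $w(x)=w(T_{x})$, and, invoking Lemma \ref{w0gsbc6} on this maximal star, record $w(y)=0$ for every $y\in T_{x}$ with $A_{y}=N(y)\setminus\{x\}\ne\emptyset$. Second, for every pair $x_{1},x_{2}$ run \textit{MaxGen2}$(G,x_{1},x_{2})$ (Lemma \ref{max2gsbc6}) to get the maximal $T$ with $|T|\ge2$ and $G[\{x_{1},x_{2}\}\cup T]$ generating; record $w(x_{1})+w(x_{2})=w(T)$ and, again by Lemma \ref{w0gsbc6}, the zero-weight equations for the vertices of $T$ with $A_{y}\ne\emptyset$. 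Third, for every edge $xy$ test with the algorithm of Theorem \ref{gsbc6p} whether $G[\{x,y\}]$ is generating and, if so, record the relating-edge equation $w(x)=w(y)$. Every recorded equation holds on all of $WCW(G)$, so the solution space $V$ of the system contains $WCW(G)$.

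The crux is the reverse inclusion $V\subseteq WCW(G)$, i.e.\ showing that the recorded equations already imply $w(B_{X})=w(B_{Y})$ for an arbitrary generating $B$. Take such a $B$; after possibly swapping sides assume the small side $X$ has $|X|\le 2$. If $B\cong K_{1,1}$ it is exactly an edge handled in the third pass. Otherwise $|Y|\ge2$, and, by the maximality asserted in Lemmas \ref{maxgsbc6} and \ref{max2gsbc6}, $Y$ is contained in the maximal set ($T_{x}$ or $T$) recorded for the corresponding center or pair. Using the recorded maximal equation together with the recorded zero-weight equations for that maximal set, the claim $w(X)=w(Y)$ reduces to showing $\sum_{y\in (T\setminus Y),\,A_{y}=\emptyset} w(y)=0$. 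Here the key sub-lemma to establish is that every $y$ with $A_{y}=\emptyset$ (a pendant of $x$, respectively a common pendant of $x_{1},x_{2}$) must belong to the side $Y$ of every generating subgraph on that center: otherwise, in any witness $S$, the set $S\cup Y$ would fail to dominate such a pendant and hence would not be maximal. Consequently $T\setminus Y$ contains no vertex with $A_{y}=\emptyset$, the displayed sum is empty, and the reduction closes.

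The main obstacle is exactly this spanning argument, and within it the pendant-containment sub-lemma, since it is what lets a single maximal generating subgraph per center stand in for all of its generating subsets; Lemma \ref{w0gsbc6} and the maximality from Lemmas \ref{maxgsbc6} and \ref{max2gsbc6} do the rest. For the running time, the first pass costs $|V(G)|\cdot O(|V(G)|^{3})$, the second $O(|V(G)|^{2})\cdot O(|V(G)|^{2})$, and the third $O(|V(G)|^{2})\cdot O(|V(G)|^{2})$, each $O(|V(G)|^{4})$; the resulting system has $O(|V(G)|^{2})$ equations in $|V(G)|$ unknowns, so Gaussian elimination also fits in $O(|V(G)|^{4})$, giving the claimed bound.
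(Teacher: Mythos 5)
Your proposal is correct and takes essentially the same route as the paper: the same two passes invoking $MaxGen1$ and $MaxGen2$ over all vertices and all pairs, the same recorded equations $w(X)=w(T)$ together with the zero-weight constraints from Lemma \ref{w0gsbc6}, and the same $O(|V(G)|^{4})$ accounting. Your explicit ``spanning'' argument --- that maximality of $T$ (Lemmas \ref{maxgsbc6} and \ref{max2gsbc6}), the zero-weight equations, and the observation that pendant neighbours of the centre must lie in the large side of every generating subgraph together imply the constraint of an arbitrary generating $B$ --- is exactly the correctness claim the paper leaves implicit, and your extra third pass over relating edges is redundant but harmless.
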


\begin{proof}
The following algorithm receives as its input a bipartite graph $G$ without
cycles of length 6. The algorithm outputs a list of restrictions. For every
function $w:V\left(  G\right)  \longrightarrow\mathbb{R}$ it holds that $w\in
WCW(G)$ if and only if $w$ satisfies all restrictions outputted by the
algorithm. The algorithm invokes the $MaxGen1$ algorithm defined in the proof
of Lemma \ref{maxgsbc6}, and the $MaxGen2$ algorithm defined in the proof of
Lemma \ref{max2gsbc6}.

\bigskip

\begin{algorithm}[H]
\DontPrintSemicolon
{ \ForEach{$v \in V(G)$}
{
{ $T \longleftarrow MaxGen1(G,v)$ \; }
{ \If{$T \neq \emptyset$}
{ {\bf output} $w(T)=w(v)$ \; }
}
{ \If{$|T| \geq 2$}
{ \ForEach{$t \in T$}
{ \If{$N(t) \neq \{ v \}$}
{ {\bf output} $w(t)=0$}
}
}
}
}
}
{ \ForEach{$v_{1} \in V(G)$}
{ \ForEach{$v_{2} \in V(G) \setminus \{v_{1}\}$}
{
{ $T \longleftarrow MaxGen2(G,v_{1},v_{2})$ \; }
{ \If{$|T| \geq 2$}
{
{ {\bf output} $w(T)=w(\{v_{1},v_{2}\})$ \; }
{ \ForEach{$t \in T$}
{ \If{$N(t) \neq \{ v_{1},v_{2} \}$}
{ {\bf output} $w(t)=0$}
}
}
}
}
}
}
}
\caption{WCW(G)\label{wcw}}
\end{algorithm}

\bigskip

\textbf{Complexity of Algorithm \ref{wcw}:} The complexity of $MaxGen1$ is
$O(|V(G)|^{3})$. That routine is invoked $O(|V(G)|)$ times by this algorithm.
The complexity of Lines 1-8 is $O(|V(G)|^{4})$. The complexity of $MaxGen2$ is
$O(|V(G)|^{2})$. That routine is invoked $O(|V(G)|^{2})$ times by this
algorithm. The complexity of Lines 9-16 is $O(|V(G)|^{4})$.
\end{proof}

\begin{corollary}
\label{wcbc6} The following problem can be solved in $O(|V(G)|^{4})$
time.\newline Input: A bipartite graph $G$ without cycles of length $6$.
\newline Question: Is $G$ well-covered?
\end{corollary}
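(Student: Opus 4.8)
Deciding whether a bipartite graph $G$ without cycles of length $6$ is well-covered can be done in $O(|V(G)|^4)$ time.

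The plan is to reduce the well-coveredness question to the computation of $WCW(G)$, which Theorem \ref{wcwbc6} already performs in $O(|V(G)|^4)$ time. The key observation is that $G$ is well-covered (in the unweighted sense) if and only if $G$ is $w$-well-covered for the constant weight function $w \equiv 1$, since all maximal independent sets having the same cardinality is exactly the statement that they all have the same weight under $w(v)=1$ for every $v$. Thus well-coveredness is equivalent to the membership test $\mathbf{1} \in WCW(G)$, where $\mathbf{1}$ denotes the all-ones function.

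First I would run Algorithm \ref{wcw} on $G$; by Theorem \ref{wcwbc6} this produces, in $O(|V(G)|^4)$ time, a finite list of linear restrictions (each of the form $w(T)=w(v)$, $w(T)=w(\{v_1,v_2\})$, or $w(t)=0$) with the property that a function $w$ lies in $WCW(G)$ precisely when it satisfies every restriction on the list. Since there are $O(|V(G)|^3)$ restrictions output and each is a linear equation in the vertex-weights, I would then simply substitute the constant function $w(v)=1$ into every restriction and check whether all of them hold. Each restriction is checked in at most $O(|V(G)|)$ time (summing the ones on each side), so this verification phase costs $O(|V(G)|^4)$, which is dominated by the cost of computing the restrictions and does not change the overall bound. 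The algorithm returns ``well-covered'' if and only if all restrictions are satisfied by $\mathbf{1}$.

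Correctness follows directly from the two equivalences just noted: $G$ is well-covered $\iff$ $\mathbf{1} \in WCW(G)$ $\iff$ $\mathbf{1}$ satisfies every restriction produced by Algorithm \ref{wcw}. There is essentially no hard step here, since all the real work — the correctness and complexity of generating the defining restrictions of the vector space $WCW(G)$ — is already discharged by Theorem \ref{wcwbc6}. The only point requiring a moment's care is the very first equivalence, namely that the unweighted notion of well-covered (all maximal independent sets maximum, i.e.\ of equal cardinality) coincides with $w$-well-coveredness for $w\equiv 1$; this is immediate from the definition $w(S)=\sum_{s\in S} w(s)=|S|$ when $w\equiv 1$. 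Hence the total running time is $O(|V(G)|^4)$, as claimed.
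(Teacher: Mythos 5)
Your proposal is correct: $G$ is well-covered iff the constant function $\mathbf{1}$ lies in $WCW(G)$, and testing whether $\mathbf{1}$ satisfies the $O(|V(G)|^{3})$ linear restrictions output by Algorithm \ref{wcw} costs at most $O(|V(G)|^{4})$, which does not increase the overall bound. This is in fact exactly the route the paper acknowledges in the first sentence of its proof (``one can find the vector space $WCW(G)$, and decide whether it includes the uniform function $w\equiv 1$''), but the paper then deliberately departs from it: instead of generating the full restriction list and substituting, Algorithm \ref{wc} checks directly that $|MaxGen1(G,v)|\leq 1$ for every vertex, that $|MaxGen2(G,v_{1},v_{2})|\leq 2$ for every pair, and that whenever $|T|=2$ each $t\in T$ satisfies $N(t)=\{v_{1},v_{2}\}$, returning \textsc{false} at the first violation. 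The two algorithms have the same asymptotic complexity $O(|V(G)|^{4})$; what the paper's version buys is early termination and the avoidance of materializing and re-scanning the restrictions (in particular it never needs to emit the $w(t)=0$ constraints, which under $w\equiv 1$ are automatic failures), while your version buys modularity, since it reuses Theorem \ref{wcwbc6} as a black box and requires no new correctness argument beyond the observation that well-coveredness is $w$-well-coveredness for $w\equiv 1$.
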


\begin{proof}
In order to decide whether a graph $G$ is well-covered, one can find the
vector space $WCW(G)$, and decide whether it includes the uniform function $w
\equiv1$. However, we present an algorithm which is faster than Algorithm
\ref{wcw}, although it has the same computational complexity.

\begin{algorithm}[H]
\DontPrintSemicolon
{ \ForEach{$v \in V(G)$}
{
{ $T \longleftarrow MaxGen1(G,v)$ \; }
{ \If{$|T| > 1$}
{ \Return $FALSE$}
}
}
}
{ \ForEach{$v_{1} \in V(G)$}
{ \ForEach{$v_{2} \in V(G) \setminus \{v_{1}\}$}
{
{ $T \longleftarrow MaxGen2(G,v_{1},v_{2})$ \; }
{ \If{$|T| > 2$}
{ \Return $FALSE$}
}
{ \If{$|T| = 2$}
{ \ForEach{$t \in T$}
{ \If{$N(t) \neq \{ v_{1},v_{2} \}$}
{ \Return $FALSE$}
}
}
}
}
}
}
{ \Return $TRUE$}
\caption{WC(G)\label{wc}}
\end{algorithm}

\bigskip

\textbf{Complexity of Algorithm \ref{wc}:} The complexity of $MaxGen1$ is
$O(|V(G)|^{3})$, and it is invoked $O(|V(G)|)$ times by Algorithm \ref{wc}.
The complexity of $MaxGen2$ is $O(|V(G)|^{2})$, and it is invoked
$O(|V(G)|^{2})$ times by Algorithm \ref{wc}. The total complexity of Algorithm
\ref{wc} is $O(|V(G)|^{4})$.
\end{proof}

\section{Graphs Without Cycles of Lengths 3, 4, 5, 7}

In this section, $G$ is a graph without cycles of lengths $3$, $4$, $5$ and
$7$. Since $G$ does not contain small odd cycles, $G[N_{3}(v)]$ is bipartite
for every $v\in V(G)$. Define $L(G)=\{v\in V(G)\ |$ $d(v)=1\}$, and
$S_{x}=N(x)\setminus N(L(G))$ for every $x\in V(G)\setminus L(G)$. For every
$y\in S_{x}$ it holds that $N(y)\cap N_{2}(x)\cap L(G)=\emptyset$. Therefore,
$N(y)\cap N_{2}(x)\subseteq N(N_{2}(y)\cap N_{3}(x))$.

The main result of this section is a polynomial characterization of the vector
space $WCW(G)$ for graphs without cycles of lengths 3, 4, 5 and 7.

\begin{lemma}
\label{maxsx} Let $G$ be a graph without cycles of lengths $3$, $4$, $5$ and
$7$, and let $x\in V(G)\setminus L(G)$. Then $S_{x}$ is a maximal set with the
following two properties:

\begin{enumerate}
\item $S_{x} \subseteq N(x)$

\item If $S_{x}\neq\emptyset$ then $G[\{x\}\cup S_{x}]$ is generating.
\end{enumerate}
\end{lemma}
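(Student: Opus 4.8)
The plan is to establish the two properties and then maximality. Property~1 is immediate, since $S_x=N(x)\setminus N(L(G))\subseteq N(x)$. Because $G$ has no triangle and every vertex of $S_x$ lies in $N(x)$, the set $S_x$ is independent, so $B:=G[\{x\}\cup S_x]$ is the complete bipartite graph $K_{1,|S_x|}$ with parts $\{x\}$ and $S_x$, a legitimate candidate generating subgraph. For property~2 I would exhibit an explicit witness, and for maximality I would show that every vertex of $N(x)\setminus S_x$ obstructs generation.

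For property~2, assume $S_x\neq\emptyset$ and write $N(B)=N(\{x\}\cup S_x)\setminus(\{x\}\cup S_x)$. First I would record the clean splitting $N(B)=P\sqcup Q$, where $P=N(x)\setminus S_x=N(x)\cap N(L(G))$ lies at distance $1$ from $x$ and $Q=\bigcup_{y\in S_x}(N(y)\setminus\{x\})$ lies at distance $2$ from $x$; these classes are disjoint, and triangle-freeness gives no $P$--$S_x$ edges and no $Q$--$\{x\}$ edges. Recall from the preamble that for $y\in S_x$ the set $Z_y:=N_2(y)\cap N_3(x)$ dominates $A_y:=N(y)\setminus\{x\}$. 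I would then build a candidate witness $S_0$ from (i) one leaf neighbour $\ell_{y'}$ of each $y'\in P$, which dominates $P$, and (ii) the union $\bigcup_{y\in S_x}Z_y$, which dominates $Q$. A short distance computation (using triangle- and $C_5$-freeness) shows $S_0\subseteq N_2(B)$, i.e.\ $S_0$ avoids $N[B]$. Finally I would extend $S_0$ to a maximal independent set $S^\ast$ of $G[V(G)\setminus N[B]]$ and verify directly that both $S^\ast\cup\{x\}$ and $S^\ast\cup S_x$ are maximal independent: independence is automatic since $S^\ast$ avoids $N[B]$, and every outside vertex is dominated --- vertices of $S_x$ and $P$ by $x$; vertices of $Q$ by $S_x$; each $y'\in P$ by its leaf; each $a\in Q$ by its $Z_y$-neighbour; $x$ by any $y\in S_x$; and remaining far vertices by maximality of $S^\ast$. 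Hence $S^\ast$ witnesses that $B$ is generating.

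The main obstacle is that $S_0$ must be independent, and this is exactly where the girth hypotheses bite. The leaves form an independent set and are non-adjacent to the distance-$3$ vertices of $\bigcup Z_y$, so the real content is that $\bigcup_{y\in S_x}Z_y$ is independent. I would prove this by contradiction: an edge $bb'$ with $b\in Z_y$ and $b'\in Z_{y'}$ yields, via the witnessing paths $y\sim a\sim b$ and $y'\sim a'\sim b'$ together with $x\sim y$ and $x\sim y'$, either a $C_7$ (when $y\neq y'$), a $C_5$ (when $y=y'$ but $a\neq a'$), or a $C_3$ (when $a=a'$); in the first case $C_4$-freeness rules out $a=a'$ and so guarantees seven distinct vertices. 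Since $G$ forbids all of $C_3,C_4,C_5,C_7$, no such edge exists, and $S_0$ is independent.

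For maximality it suffices, by the above, to show that no $T$ with $S_x\subsetneq T\subseteq N(x)$ is generating, equivalently that any $y'\in N(x)\setminus S_x$ obstructs generation. Such a $y'$ has a leaf neighbour $\ell$. In any purported witness $S$ for $G[\{x\}\cup T]$, the set $S$ must avoid $N[\{x\}\cup T]$, so $\ell\notin S$, while the unique neighbour $y'$ of $\ell$ lies in $T$ and hence outside $S\cup\{x\}$; therefore $\ell$ has no neighbour in $S\cup\{x\}$, contradicting maximality of $S\cup\{x\}$. Thus $T$ is not generating, every generating $T\subseteq N(x)$ satisfies $T\subseteq S_x$, and $S_x$ is maximal (indeed the unique maximum).
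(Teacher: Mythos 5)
Your proof is correct and follows essentially the same route as the paper: the same witness (leaves of $N_2(x)\cap L(G)$ to dominate $N(x)\setminus S_x$, together with $N_2(S_x)\cap N_3(x)$ to dominate $N(S_x)\cap N_2(x)$, extended to a maximal independent set outside $N[\{x\}\cup S_x]$) and the same maximality argument via an undominated leaf. Your explicit $C_3/C_4/C_5/C_7$ case analysis for the independence of $\bigcup_y Z_y$ merely fills in a step the paper asserts tersely (it follows from the fact that $N_3(x)$ is independent, since an edge inside it would close an odd walk of length $7$ through $x$).
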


\begin{proof}
Obviously, $S_{x}\subseteq N(x)$. We assume $S_{x}\neq\emptyset$, and prove
that $G[\{x\}\cup S_{x}]$ is generating. Let $T_{1}=N_{2}(S_{x})\cap N_{3}%
(x)$. Clearly, $T_{1}$ dominates $N(S_{x})\cap N_{2}(x)$. Define $T_{2}%
=N_{2}(x)\cap L(G)$. It holds that $T_{2}$ dominates $N(x)\setminus(S_{x}\cup
L(G))$.

Since $T_{1} \subseteq N_{3}(x)$, and $T_{2} \subseteq N_{2}(x) \cap L(G)$,
the set $T_{1} \cup T_{2}$ is independent. Let $T$ be any maximal independent
set of $G \setminus N[\{x\} \cup S_{x}]$, which contains $T_{1} \cup T_{2}$.
Obviously, $T$ is a witness that $G[\{x\} \cup S_{x}]$ is generating.

It remains to prove the maximality of $S_{x}$. Let $S_{x}^{\prime}\subseteq
N(x)$ such that there exists $y \in S_{x}^{\prime}\setminus S_{x}$. We prove
that $G[\{x\}\cup S_{x}^{\prime}]$ is not generating. There exists $l \in N(y)
\cap L(G)\cap N_{2}(x)$. Any independent set $S\subseteq V(G) \setminus
N[\{x\} \cup S_{x}^{\prime}]$ is not a witness that $G[\{x\}\cup S_{x}%
^{\prime}]$ is generating, because it does not dominate $l$.
\end{proof}

\begin{lemma}
\label{wnl} Let $G$ be a graph without cycles of lengths 3, 4, 5 and 7. Then
$w(x)=w(L(G)\cap N(x))$ for every $x \in N(L(G))$ and for every $w\in WCW(G)$.
\end{lemma}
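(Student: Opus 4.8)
The plan is to exhibit an explicit generating subgraph whose two sides are $\{x\}$ and the set of leaf-neighbours of $x$; once this is done, the defining property of generating subgraphs (namely $w(B_X)=w(B_Y)$ for every $w\in WCW(G)$, recorded in the introduction) yields the claim immediately. Fix $x\in N(L(G))$ and write $L_x=L(G)\cap N(x)$ for its leaf-neighbours and $M_x=N(x)\setminus L_x$ for its remaining (non-leaf) neighbours; note $L_x\neq\emptyset$ since $x\in N(L(G))$. I want to show that $B=G[\{x\}\cup L_x]$, which is the induced star $K_{1,|L_x|}$ with bipartition $\{x\}$ and $L_x$, is generating. The whole task reduces to producing a witness $S$: an independent set, disjoint from $N[x]$, for which both $S\cup\{x\}$ and $S\cup L_x$ are maximal independent sets of $G$.

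First I would reformulate what a witness must accomplish. Because every leaf of $L_x$ has $x$ as its only neighbour, $N[\{x\}\cup L_x]=N[x]$, so $S$ must avoid $N[x]$. Then $S\cup\{x\}$ is maximal exactly when $S$ is a maximal independent set of $G-N[x]$ (the vertices of $N(x)$ are dominated by $x$), while $S\cup L_x$ is maximal exactly when, in addition, $S$ dominates $M_x$ (the leaves of $L_x$ dominate $x$ but nothing else). Hence it suffices to find a maximal independent set of $G-N[x]$ that dominates $M_x$.

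The heart of the argument is the claim that $A=\bigcup_{m\in M_x}\bigl(N(m)\cap N_2(x)\bigr)$ is an independent set which dominates $M_x$. Domination is easy: each $m\in M_x$ is a non-leaf neighbour of $x$, so $d(m)\geq 2$, and since $G$ has no $C_3$ any neighbour of $m$ other than $x$ lies in $N_2(x)$; thus $N(m)\cap N_2(x)\neq\emptyset$. Independence is where the forbidden cycles enter, and this is the step I expect to be the main obstacle. Two vertices inside a single $N(m)\cap N_2(x)$ cannot be adjacent, for that would create a $C_3$ through $m$; and two vertices $z\in N(m)\cap N_2(x)$, $z'\in N(m')\cap N_2(x)$ with $m\neq m'$ are necessarily distinct (a common vertex would give the $C_4$ $\,x,m,z,m'$, excluded by the absence of $C_4$) and non-adjacent (an edge $zz'$ would close the $C_5$ $\,x,m,z,z',m'$). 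So the absence of $C_3$, $C_4$, and $C_5$ forces $A$ to be independent.

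Finally I would extend $A$ to a maximal independent set $S$ of $G-N[x]$; this keeps $A\subseteq S$, so $S$ still dominates $M_x$. By the reformulation, $S\cup\{x\}$ and $S\cup L_x$ are both maximal independent sets of $G$, where independence of $S\cup L_x$ uses that each leaf of $L_x$ meets $S$ only through $x\notin S$. Thus $S$ witnesses that $B=G[\{x\}\cup L_x]$ is generating, whence $w(\{x\})=w(L_x)$ for every $w\in WCW(G)$, i.e.\ $w(x)=w(L(G)\cap N(x))$. The only degenerate situation, where $x$ is itself a leaf, has $M_x=\emptyset$ and $A=\emptyset$, and the same construction (with $S$ any maximal independent set of $G-N[x]$) goes through, recovering $w(x)=w(l)$ on the $K_2$-component $\{x,l\}$.
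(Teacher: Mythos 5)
Your proof is correct and follows essentially the same route as the paper: the paper takes a maximal independent set $T$ of $G\setminus N[x]$ containing $N_{2}(x)$ and compares $T\cup\{x\}$ with $T\cup(N(x)\cap L(G))$, which is exactly your witness, since your set $A$ coincides with $N_{2}(x)$ (no vertex of $N_{2}(x)$ can be adjacent to a leaf in $N(x)$). The only difference is that you spell out the independence of $N_{2}(x)$ via the forbidden $C_{3}$, $C_{4}$, $C_{5}$, a step the paper leaves implicit.
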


\begin{proof}
Let $x \in N(L(G))$ and let $T$ be a maximal independent set of $G \setminus
N[x]$, which contains $N_{2}(x)$. Then $T_{1} = T \cup\{x\}$ and $T_{2} = T
\cup( N(x) \cap L(G) )$ are maximal independent sets of $G$. The fact that
$w(T_{1}) = w(T_{2})$ implies $w(x)= w(L(G) \cap N(x))$.
\end{proof}

\begin{lemma}
\label{weq0} Let $G$ be a graph without cycles of lengths $3$, $4$, $5$ and
$7$. Then $w(y)=0$ for every vertex $y\in V(G)\setminus N[L(G)]$ and for every
$w\in WCW(G)$.
\end{lemma}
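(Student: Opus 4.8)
The plan is to isolate $w(y)$ by sandwiching it between two generating relations that share the same ``center'' vertex, so that subtracting one from the other cancels everything except $w(y)$.

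First I would fix $w\in WCW(G)$ and a vertex $y\in V(G)\setminus N[L(G)]$. Since $y$ is neither a leaf nor adjacent to a leaf, every neighbour of $y$ is a non-leaf and $\deg(y)\ge 2$. Pick a neighbour $x$ of $y$; then $x\in V(G)\setminus L(G)$, so $S_{x}$ is defined, and because $y\in N(x)$ while $y\notin N(L(G))$ we get $y\in S_{x}$. In particular $S_{x}\neq\emptyset$, so Lemma~\ref{maxsx} gives that $G[\{x\}\cup S_{x}]$ is generating, whence $w(x)=w(S_{x})$. The proof then reduces to showing that the smaller subgraph $G[\{x\}\cup(S_{x}\setminus\{y\})]$ is \emph{also} generating: that yields $w(x)=w(S_{x}\setminus\{y\})$, and comparing the two equalities forces $w(y)=w(S_{x})-w(S_{x}\setminus\{y\})=0$. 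This is precisely the one-sided analogue of the mechanism used for the bipartite case in Lemma~\ref{w0gsbc6}.

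The engine for the reduction step is the domination fact recorded just before Lemma~\ref{maxsx}. Writing $A_{y}=N(y)\setminus\{x\}$ and $Z_{y}=N_{2}(y)\cap N_{3}(x)$, the absence of triangles forces $A_{y}=N(y)\cap N_{2}(x)$, and $N(y)\cap N_{2}(x)\subseteq N(Z_{y})$, so $Z_{y}$ dominates $A_{y}$; moreover $A_{y}\neq\emptyset$ since $\deg(y)\ge 2$. I would argue that deleting $y$ from the generating subgraph only exposes the vertices of $A_{y}$, and that these are recaptured by $Z_{y}$, which already sits in $N_{3}(x)$, two steps beyond $x$. Concretely, starting from the witness $T$ of Lemma~\ref{maxsx} (a maximal independent set of $G\setminus N[\{x\}\cup S_{x}]$ containing $N_{2}(S_{x})\cap N_{3}(x)$ and $N_{2}(x)\cap L(G)$, which one checks already contains $Z_{y}$), I would repair it into a witness for $G[\{x\}\cup(S_{x}\setminus\{y\})]$ by adjoining one vertex $a^{*}\in A_{y}$ to dominate $y$ and then completing to a maximal independent set of $G\setminus N[\{x\}\cup(S_{x}\setminus\{y\})]$, leaving $Z_{y}$ to dominate the rest of $A_{y}$.

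The main obstacle will be the bookkeeping that makes this repaired witness legitimate: one must verify that the chosen $a^{*}\in A_{y}$ is non-adjacent to the rest of the witness, and that $T$ can be thinned and re-extended so that both $T'\cup\{x\}$ and $T'\cup(S_{x}\setminus\{y\})$ are maximal independent. This is exactly where the forbidden cycles $C_{3},C_{4},C_{5},C_{7}$ enter, ruling out the unwanted edges between $A_{y}$, $Z_{y}$, and the remaining witness vertices. A second delicate point is the degenerate case $S_{x}=\{y\}$, where $G[\{x\}\cup(S_{x}\setminus\{y\})]$ collapses to a single vertex and is not generating; there I would either choose the neighbour $x$ so that $\left\vert S_{x}\right\vert \ge 2$, or, if no such neighbour exists, run a separate argument using two distinct neighbours $x_{1},x_{2}$ of $y$ and invoking $C_{4}$-freeness, which forces $N(x_{1})\cap N(x_{2})=\{y\}$.
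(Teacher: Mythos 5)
Your strategy is genuinely different from the paper's: you fix a single neighbour $x$ of $y$ and try to compare the generating subgraph $G[\{x\}\cup S_{x}]$ with the smaller $G[\{x\}\cup(S_{x}\setminus\{y\})]$, whereas the paper takes two distinct neighbours $x_{1},x_{2}$ of $y$, uses $w(x_{i})=w(S_{x_{i}})$ from Lemma \ref{maxsx}, and then builds a common witness $T^{*}$ for the (non-complete) bipartite subgraph $G[\{x_{1},x_{2}\}\cup S_{x_{1}}\cup S_{x_{2}}]$ to obtain $w(x_{1})+w(x_{2})=w(S_{x_{1}})+w(S_{x_{2}})-w(y)$. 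Unfortunately your key claim --- that $G[\{x\}\cup(S_{x}\setminus\{y\})]$ is generating whenever $\left\vert S_{x}\right\vert \geq 2$ --- is false. Take $G=C_{6}$ with vertices $v_{1},\dots,v_{6}$ (no $C_{3},C_{4},C_{5},C_{7}$, and $L(G)=\emptyset$), $y=v_{1}$, $x=v_{2}$, so $S_{x}=\{v_{1},v_{3}\}$. A witness for $G[\{v_{2},v_{3}\}]$ must lie in $\{v_{5},v_{6}\}$ and must contain a neighbour of $v_{1}$, hence must contain $v_{6}$ and exclude $v_{5}$; but then $v_{4}$ is dominated by neither $v_{6}$ nor $v_{2}$, so no witness exists. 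This is exactly where your ``repair'' step breaks: when you adjoin $a^{*}\in A_{y}$ you must evict $N(a^{*})\cap N_{3}(x)$ from the witness, and an evicted vertex (here $v_{5}$) can be the \emph{unique} dominator of some vertex of $N(S_{x}\setminus\{y\})\cap N_{2}(x)$ (here $v_{4}$), because the obstructing configuration is a $6$-cycle, which is not forbidden. Your fallback for the degenerate case $S_{x}=\{y\}$ is essentially the paper's two-neighbour argument, which runs into the same obstruction.

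You should also know that the gap is not repairable, because the statement itself fails on $C_{6}$: the function $w$ with $(w(v_{1}),\dots,w(v_{6}))=(1,1,0,-1,-1,0)$ assigns weight $0$ to all five maximal independent sets $\{v_{1},v_{4}\}$, $\{v_{2},v_{5}\}$, $\{v_{3},v_{6}\}$, $\{v_{1},v_{3},v_{5}\}$, $\{v_{2},v_{4},v_{6}\}$, so $w\in WCW(C_{6})$, while $L(C_{6})=\emptyset$ and $w(v_{1})=1\neq 0$. The paper's own proof fails at the assertion that $T=(N_{2}(S_{x_{1}}\cup S_{x_{2}})\cap N_{3}(\{x_{1},x_{2}\}))\cup(N_{2}(\{x_{1},x_{2}\})\cap L(G))$ dominates $N(B)$: for $C_{6}$ with $x_{1}=v_{2}$, $x_{2}=v_{6}$ this $T$ is empty while $N(B)=\{v_{4}\}$. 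So neither your route nor the paper's can succeed without an additional hypothesis ruling out such $6$-cycle configurations.
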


\begin{proof}
Let $x_{1}$ and $x_{2}$ be two distinct neighbors of $y$. Since $y
\not \in N[L(G)]$, it holds that $y \in S_{x_{1}}$ and $y \in S_{x_{2}}$. The
fact that $G$ does not contain cycles of length 4 implies that $\{y\} =
S_{x_{1}} \cap S_{x_{2}}$. By Lemma \ref{maxsx}, each of $G[\{x_{1}\} \cup
S_{x_{1}}]$ and $G[\{x_{2}\} \cup S_{x_{2}}]$ is generating. Therefore,
$w(x_{1})=w(S_{x_{1}})$ and $w(x_{2})=w(S_{x_{2}})$, for every $w \in WCW(G)$.

Let $B = G[\{x_{1},x_{2}\} \cup S_{x_{1}} \cup S_{x_{2}}]$. Clearly, $B$ is an
induced bipartite subgraph of $G$, but it is not necessarily complete.
Therefore, it does not fit the definition of a generating subgraph. However,
we prove that there exists a set, $T^{*}$, such that $T^{*} \cup\{x_{1},
x_{2}\}$ and $T^{*} \cup S_{x_{1}} \cup S_{x_{2}}$ are maximal independent
sets of $G$.

Let $T = ( N_{2}(S_{x_{1}} \cup S_{x_{2}}) \cap N_{3}(\{x_{1},x_{2}\}) ) \cup(
N_{2}(\{x_{1},x_{2}\}) \cap L(G) )$. It it easy to see that $T$ is independent
and dominates $N(B)$. Let $T^{*}$ be any maximal independent set of $G
\setminus N[B]$ which contains $T$. The fact that $T^{*} \cup\{x_{1}, x_{2}\}$
and $T^{*} \cup S_{x_{1}} \cup S_{x_{2}}$ are maximal independent sets of $G$
implies that $w( \{x_{1}, x_{2}\} ) = w( S_{x_{1}} \cup S_{x_{2}} )$.
Equivalently, $w(x_{1}) + w(x_{2}) = w(S_{x_{1}}) + w(S_{x_{2}}) - w(S_{x_{1}}
\cap S_{x_{2}})$. Therefore, $w(y) = 0$.
\end{proof}

\begin{theorem}
\label{wcwbg6} Let $G$ be a connected graph without cycles of lengths 3, 4, 5
and 7, which is not isomorphic to $K_{2}$. Let $w : V(G) \longrightarrow
\mathbb{R}$. Then the following conditions are equivalent.

\begin{enumerate}
\item $w \in WCW(G)$

\item $w(x)=w(N(x) \cap L(G))$ for every vertex $x \in V(G) \setminus L(G)$.
\end{enumerate}
\end{theorem}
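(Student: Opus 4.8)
The plan is to prove the two implications separately. The forward direction $(1)\Rightarrow(2)$ will follow immediately from the lemmas already proved in this section, while the reverse direction $(2)\Rightarrow(1)$ carries the real content.

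For $(1)\Rightarrow(2)$, I would assume $w\in WCW(G)$, fix a non-leaf vertex $x\in V(G)\setminus L(G)$, and split according to whether $x$ is adjacent to a leaf. If $x\in N(L(G))$, then Lemma \ref{wnl} gives $w(x)=w(N(x)\cap L(G))$ at once. If $x\notin N(L(G))$, then $N(x)\cap L(G)=\emptyset$, so the right-hand side is $0$; since then $x\in V(G)\setminus N[L(G)]$, Lemma \ref{weq0} yields $w(x)=0$, and the desired equality holds again. This disposes of the forward direction with no use of connectedness.

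For $(2)\Rightarrow(1)$, the strategy is to show that every maximal independent set $M$ of $G$ has one and the same weight, namely $w(M)=w(L(G))$; as this value is independent of $M$, it follows that $G$ is $w$-well-covered. The first observation is that condition $(2)$, applied to any $x\in V(G)\setminus N[L(G)]$ (a non-leaf with no leaf-neighbor), forces $w(x)=0$, so $w$ is supported on $N[L(G)]$ and the part of $M$ outside $N[L(G)]$ contributes nothing. The key structural step is to partition $N[L(G)]$ into one group per support vertex $s\in N(L(G))$, the group of $s$ being $\{s\}$ together with its set of leaf-neighbors $L_{s}=N(s)\cap L(G)$. Since every leaf has a unique neighbor, the sets $L_{s}$ are pairwise disjoint and cover $L(G)$, so these groups genuinely partition $N[L(G)]$. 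I would then compute the contribution of each group to $w(M)$: if $s\in M$, no vertex of $L_{s}$ lies in $M$ and the group contributes $w(s)$; if $s\notin M$, maximality forces every leaf of $L_{s}$ into $M$ (its only neighbor $s$ is absent) and the group contributes $w(L_{s})$. By condition $(2)$, $w(s)=w(L_{s})$, so in either case the group contributes exactly $w(s)$. Summing over all support vertices gives $w(M)=\sum_{s\in N(L(G))}w(s)=\sum_{s}w(L_{s})=w(L(G))$, independent of the chosen $M$.

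I expect the main obstacle to be the bookkeeping of this partition and the placement of the two hypotheses on $G$. The delicate points are: (i) the maximality argument that an absent support vertex forces all of its leaves into $M$, which relies on leaves having degree one; and (ii) verifying that the groups are truly disjoint and exhaust $N[L(G)]$, which uses both that a leaf has a single neighbor and that a support vertex is never itself a leaf. This last fact is exactly where the hypothesis that $G$ is connected and not isomorphic to $K_{2}$ enters: otherwise a $K_{2}$-component would consist of two mutually adjacent leaves, so $N(L(G))\cap L(G)$ would be nonempty and the group decomposition would fail --- which is also why the statement must exclude $K_{2}$. Once these points are settled, the identity $w(M)=w(L(G))$ holds for every maximal independent set, and both implications together give the equivalence.
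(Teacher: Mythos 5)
Your proposal is correct and follows essentially the same route as the paper: the forward direction by combining Lemmas \ref{wnl} and \ref{weq0}, and the reverse direction by partitioning $N[L(G)]$ into the disjoint blocks $\{v\}\cup(N(v)\cap L(G))$ for $v\in N(L(G))$, observing that a maximal independent set meets each block in either $\{v\}$ or $N(v)\cap L(G)$ (both of weight $w(v)$ under condition 2) and that all remaining vertices have weight zero. Your additional remarks on where connectedness and the exclusion of $K_{2}$ are needed (so that no support vertex is itself a leaf) correctly pin down a point the paper leaves implicit.
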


\begin{proof}
If $G$ is an isolated vertex then $V(G)\setminus L(G)=\emptyset$, and every
function defined on $V(G)$ belongs to $WCW(G)$. Therefore, Condition 1 and
Condition 2 hold. In the remaining of the proof we assume that $\left\vert
V(G)\right\vert \geq3$.

Suppose that the first condition holds. By Lemmas \ref{wnl} and \ref{weq0},
the second condition holds as well.

Assume the second condition holds. Denote $N(L(G))=\{v_{1},...,v_{k}\}$. We
prove that the weight of every maximal independent set is $\sum\limits_{1\leq
i\leq k}w(v_{i})$. For every $1\leq i\leq k$ define $T_{i}=\{v_{i}%
\}\cup(N(v_{i})\cap L(G))$. Let $S$ be a maximal independent set of $G$. For
every $1\leq i\leq k$ either $S\cap T_{i}=\{v_{i}\}$ or $S\cap T_{i}%
=N(v_{i})\cap L(G)$. In both cases $w(S\cap T_{i})=w(v_{i})$. Note that
$T_{i}\cap T_{j}=\emptyset$ for every $1\leq i<j\leq k$. Moreover, $w(v)=0$
for every vertex $v\in V(G)\setminus(\bigcup\limits_{1\leq i\leq k}T_{i})$.
Hence,
\begin{align*}
w(S)  &  =\sum_{1\leq i\leq k}w(S\cap T_{i})+w(S\cap(V(G)\setminus
(\bigcup_{1\leq i\leq k}T_{i}))\\
&  =\sum_{1\leq i\leq k}w(v_{i})+0=w(N(L(G))).
\end{align*}
Therefore, the first condition holds.
\end{proof}

Bipartite graphs with girth at least $6$ are a restricted case of graphs
without cycles of lengths $3$, $4$, $5$, and $7$. Hence, we obtain the following.

\begin{corollary}
\label{wcwbg6c} Let $G$ be a connected bipartite graph with girth at least 6.
Assume that $G$ is not isomorphic to $K_{2}$. Let $w : V(G) \longrightarrow
\mathbb{R}$. Then the following conditions are equivalent.

\begin{enumerate}
\item $w \in WCW(G)$

\item $w(x)=w(N(x) \cap L(G))$ for every vertex $x \in V(G) \setminus L(G)$.
\end{enumerate}
\end{corollary}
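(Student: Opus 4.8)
The plan is to derive Corollary \ref{wcwbg6c} as an immediate specialization of Theorem \ref{wcwbg6}. The only thing that needs to be checked is that a connected bipartite graph $G$ with girth at least $6$ satisfies the hypotheses of Theorem \ref{wcwbg6}, namely that $G$ is connected, is not isomorphic to $K_{2}$, and contains no cycles of lengths $3$, $4$, $5$, and $7$. Connectedness and the non-isomorphism to $K_{2}$ are assumed outright in the corollary, so these transfer with no work.

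For the cycle condition I would argue in two steps. Since $G$ is bipartite, it has no odd cycles whatsoever; in particular it contains no cycle of length $3$, $5$, or $7$. Since $G$ has girth at least $6$, its shortest cycle has length at least $6$, so it contains no cycle of length $3$, $4$, or $5$. Combining these two observations, $G$ contains no cycles of lengths $3$, $4$, $5$, and $7$. Thus $G$ is a connected graph without cycles of lengths $3$, $4$, $5$, and $7$ that is not isomorphic to $K_{2}$, which is precisely the setting of Theorem \ref{wcwbg6}.

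With the hypotheses verified, the equivalence of Conditions 1 and 2 is exactly the conclusion of Theorem \ref{wcwbg6} applied to $G$, so there is nothing further to prove. I do not expect any genuine obstacle: the entire content is the structural observation that connected bipartite graphs of girth at least $6$ form a subclass of connected graphs without cycles of lengths $3$, $4$, $5$, and $7$, whence the characterization of $WCW(G)$ already established specializes verbatim.
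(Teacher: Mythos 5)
Your proposal is correct and takes exactly the paper's approach: the paper derives the corollary from Theorem \ref{wcwbg6} by noting that bipartite graphs with girth at least $6$ form a restricted case of graphs without cycles of lengths $3$, $4$, $5$, and $7$, which is precisely your two-step observation (bipartiteness excludes the odd cycles, girth at least $6$ excludes the short ones). There is nothing to add.
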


\section{Conclusions and Future Work}

In Subsection 3.2 we considered graphs without cycles of lengths $3$ and $5$.
We proved that recognizing generating subgraphs isomorphic to $K_{1,2}$ is an
\textbf{NP-}complete task. By performing minor changes in the proof we showed
that for every $p\geq1$ and $q\geq2$, recognizing generating subgraphs
isomorphic to $K_{p,q}$ is also \textbf{NP-}complete. Hence, we conjecture the following.

\begin{conjecture}
\label{genmonotone} Let $i \leq p$ and $j \leq q$. Let $\Psi$ be a family of
graphs for which recognizing generating subgraphs isomorphic to $K_{i,j}$ is
\textbf{NP-}complete. Then recognizing generating subgraphs of $\Psi$
isomorphic to $K_{p,q}$ is \textbf{NP-}complete as well.
\end{conjecture}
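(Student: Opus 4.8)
The plan is to generalize the padding construction of Corollary \ref{kpqc35npc}, reducing the $\Psi$-restricted $K_{i,j}$ recognition problem to the $\Psi$-restricted $K_{p,q}$ one. Given an instance $(G,B)$ with $G\in\Psi$ and $B=G[B_X\cup B_Y]$ an induced $K_{i,j}$, say $B_X=\{x_1,\dots,x_i\}$ and $B_Y=\{y_1,\dots,y_j\}$, I would form $H$ from $G$ by adding new vertices $x_{i+1},\dots,x_p$ and $y_{j+1},\dots,y_q$ together with all edges $x_ay_b$ for $1\le a\le p$, $1\le b\le q$, so that $B'=H[\{x_1,\dots,x_p\}\cup\{y_1,\dots,y_q\}]$ is an induced $K_{p,q}$. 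The essential feature is that each new vertex receives no edge to $V(G)\setminus V(B)$. Since $i,j,p,q$ are fixed constants, $|V(H)|=|V(G)|+O(1)$ and $H$ is built in polynomial time.

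First I would prove the equivalence by witness transfer. A witness is always disjoint from the vertex set of the generating subgraph, since each side of the bipartition is complete to the other. In the forward direction, if $S$ witnesses that $B$ is generating in $G$, then the same $S$ witnesses $B'$ in $H$: the new vertices are adjacent only to the opposite side of $B'$, so $S\cup B'_X$ and $S\cup B'_Y$ remain independent, and maximality carries over because no old vertex gains a new neighbor, while each new vertex of one side is dominated by an original vertex of the opposite side. In the backward direction, a witness $S'$ for $B'$ avoids $V(B')$, hence $S'\subseteq V(G)$, and I would check that $S'$ witnesses $B$ in $G$; the only point to verify is maximality, where an old vertex outside $S'\cup B_X$ cannot be dominated by any newly added $x_a$ (those have no old neighbors), so its dominating neighbor already lies in $S'\cup B_X$. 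This half of the argument is routine.

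The real difficulty, and the reason the statement is only conjectured, is the membership $H\in\Psi$, without which the constructed instance is not a legal input for the $\Psi$-restricted problem. No family-independent argument can succeed here, and in fact the conjecture is false as literally stated: an induced $K_{p,q}$ with $p,q\ge2$ necessarily contains a $4$-cycle, so for the family of graphs without cycles of lengths $4$ and $5$ --- for which recognizing $K_{1,1}$ generating subgraphs (relating edges) is already \textbf{NP-}complete --- no member contains $K_{2,2}$ at all, whence recognizing $K_{2,2}$ generating subgraphs is trivially in \textbf{P}. Thus the $H\in\Psi$ step must be discharged from a structural property of $\Psi$, namely closure under attaching the complete-bipartite padding above to a $K_{i,j}$ without leaving the family; this is exactly what was verified by hand in Corollary \ref{kpqc35npc}, where the new $4$-cycles are harmless because the family only excludes $C_3$ and $C_5$. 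The approach I would actually pursue is therefore to isolate the minimal closure hypothesis on $\Psi$ under which the witness-transfer reduction goes through verbatim, and to prove the conjecture in that corrected form.
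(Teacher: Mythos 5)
The statement you were given is presented in the paper as a conjecture: the paper contains no proof of it, only the motivating special case of Corollary \ref{kpqc35npc}, where the padding construction you describe is carried out for the family of graphs without cycles of lengths $3$ and $5$ and the membership of the padded graph in that family is checked by hand. Your witness-transfer argument is exactly the mechanism behind Corollary \ref{kpqc35npc} (and is correct as far as it goes: a witness is disjoint from the generating subgraph, new vertices see only the opposite side of the padded $K_{p,q}$, and maximality transfers in both directions), and your diagnosis of the missing step is the right one: nothing in the hypothesis guarantees that the padded graph $H$ lies in $\Psi$, and without that the reduction does not produce a legal instance of the restricted problem.

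Your further observation deserves emphasis, because it goes beyond locating a gap: the conjecture as literally stated appears to be false. Take $\Psi$ to be the family of graphs without cycles of lengths $4$ and $5$, with $i=j=1$ and $p=q=2$. Recognizing relating edges, i.e.\ generating subgraphs isomorphic to $K_{1,1}$, is \textbf{NP-}complete for this family \cite{lt:relatedc4}, so the hypothesis of the conjecture holds; but an induced $K_{2,2}$ is a $C_{4}$, which no member of $\Psi$ contains (under the paper's convention forbidden cycles need not be induced), so the $K_{2,2}$-recognition problem on $\Psi$ has an empty set of yes-instances and is trivially in \textbf{P}, and the empty language is not \textbf{NP-}complete. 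The honest conclusion is the one you reach: the statement cannot be proved as written, and the productive project is to isolate a closure hypothesis on $\Psi$ (that attaching the complete bipartite padding to an induced $K_{i,j}$ of a member of $\Psi$ stays inside $\Psi$) under which your reduction becomes a theorem. Since the paper offers no proof of this conjecture, there is nothing to compare your approach against; what you have written is best read not as a proof attempt but as a correct explanation of why the claim was left as a conjecture, together with an apparent refutation of its fully general form that the authors would want to know about.
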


We considered graphs which do not contain cycles of lengths $3$ and $5$. For
this family of graphs we proved that recognizing generating subgraphs is an
\textbf{NP-}complete problem. However, we still do not know the complexity
statuses of recognizing relating edges, deciding whether a graph is
well-covered, and finding the vector space $WCW$.

\end{document}